\DeclareRobustCommand{\stirling}{\genfrac\{\}{0pt}{}}
\newtheorem{theorem}{Theorem}
\newtheorem{lemma}{Lemma}
\newtheorem{conjecture}{Conjecture}
\begin{document}
\title{Collusion-resistant fingerprinting of parallel content channels}
\author{Basheer Joudeh, Boris \v{S}kori\'{c}}
\date{}
\maketitle

\abstract{
The fingerprinting game is analysed when the coalition size $k$ is known to the tracer, but the colluders can distribute themselves across $L$ TV channels. The collusion channel is introduced and the extra degrees of freedom for the coalition are made manifest in our formulation. We introduce a payoff functional that is analogous to the single TV channel case, and is conjectured to be closely related to the fingerprinting capacity. For the binary alphabet case under the marking assumption, and the restriction of access to one TV channel per person per segment, we derive the asymptotic behavior of the payoff functional. We find that the value of the maximin game for our payoff is asymptotically equal to $L^2/k^2 2 \ln 2$, with optimal strategy for the tracer being the arcsine distribution, and for the coalition being the interleaving attack across all TV channels, as well as assigning an equal number of colluders across the $L$ TV channels.
}

\section{Introduction}

\subsection{Collusion resistant fingerprinting}

Fingerprinting, also known as
forensic watermarking, is a technique for tracing the origin
and distribution of digital content.
Before distribution, the content is modified by embedding an imperceptible watermark,
which is unique for each recipient.
When an unauthorized copy of the
content is released, the watermark in this copy reveals information about the
identities of those who created the copy.
A tracing algorithm (also called a decoder) outputs a list of suspicious users.
This procedure is known as forensic watermarking or traitor tracing.

The most powerful attack against watermarking is the {\em collusion
attack}, where multiple users (the `coalition') combine their
differently watermarked versions of the same content; the detected
differences partly reveal the locations of the hidden marks and allow for an informed attack.
Various collusion-resistant codes have
been developed, most notably the
class of {\em bias-based} codes, introduced by G.\,Tardos in
2003 \cite{Tardos,Tardos2008}.
For each watermarking symbol position the tracer first generates a bias $w\in(0,1)$
drawn from a distribution $f_W$
and then assigns to each user a watermark symbol `1' with probability $w$ and `0' with probability $1-w$.
Work on bias-based codes includes
improved analyses
\cite{BlayerTassa,Furon08,Furon_length,LaarhovenWeger,TardosFourier,Vladimirova,revisited},
code modifications
\cite{HuangMoulinWIFS09,Nuida_c3,Nuida_DCC09}, advanced decoders
\cite{AmiriTardos,FuronEM,DonQuixote,OosterwijkIH2013,FuronDesoubeaux,Skoric2015,SdG2015} and generalizations
\cite{Furon_asym,symmetric,CDM,XFF}.
Bias-based codes
achieve the asymptotically optimal relationship $n \propto k^2$, where
$n$ is the  sufficient code length, and $k$ is the coalition size.

An important result was finding the asymptotic {\em saddlepoint}
of the information-theoretic maximin game \cite{HuangMoulinWIFS09,HM2012ISIT,HM2012TIFS}
in the case of the Restricted Digit Model\footnote{
In the Restricted Digit Model, colluders must output a symbol that has been received by at least one of them.
}
and joint decoding.

The saddlepoint is a pair (bias distribution $f_W$, attack strategy) such that it is disadvantageous for either party to
depart from their strategy.
With increasing $k$, the solution of the max-min game for the binary fingerprinting alphabet
gets closer to the combination\footnote{
This $f_W$ is known as the arcsine distribution, $f_W(w){\rm d} w = \frac{1}{\pi}{\rm d} \arcsin(2w-1)$.
}
($f_W=\frac1{\pi\sqrt{w(1-w)}}$
, attack = Interleaving).
In the Interleaving attack the colluders output the symbol of one colluder chosen uniformly at random.

Knowing the location of the saddlepoint allows the tracer to build
a {\em universal} decoder that works optimally against the saddlepoint attack and
that works well against all other attacks too. What is usually not considered in studies of forensic watermarking is that
most pirate decoder boxes observed in practice give access to multiple TV channels in parallel. Hence, attackers have an additional degree of freedom that has not yet been explored in the academic literature:
which TV channel to collude on at which point in time. The information-theoretic maximin game has not yet been studied for multiple TV channels attacks.


\subsection{Contributions and outline}

We study the information-theoretic maximin game for the binary fingerprinting scenario with multiple parallel TV channels
which are being attacked simultaneously by a set of colluders under the Restricted Digit Model.
We consider the {\em static} case, as opposed to {\em dynamic} traitor tracing, i.e.\;we do not allow the parties to adapt their strategy
as a function of symbols observed previously.
We assume that each attacker can tune into merely one channel, and furthermore we consider only attack strategies in which
the colluders take equal risk, and all TV channels are treated as being equally important.

Under these restrictions we study the mutual information $I(\hat Y; \hat Z, \hat C| \hat W)$,
which is a straightforward generalisation of the single-channel figure of merit
$I(Y;M|W)$.
Here the hat indicates a vector in which each entry comes from one TV channel;
the $Y$ is the colluders' output, the $M$ stands for the coalition's symbol tally in the single TV channel case, while $(\hat Z, \hat C)$ represent the coalition's symbol tally in the multiple TV channel case, and
$W$ is the bias. Although the generalised figure of merit looks simple, the multiple TV channel maximin game is more complicated
than the single TV channel case. If a pirate is active in one TV channel, then this excludes the possibility that they are active in another TV channel. This exclusion causes a nontrivial dependence between the TV channels, which complicates the analysis:
it is not a priori clear if the multi-channel attack can be treated as a set of independent single-channel attacks.

\begin{itemize}[leftmargin=5mm,itemsep=0mm]
\item
We find the solution of the maximin game for the $k \to \infty$ limit of the
payoff function $I(\hat Y; \hat Z, \hat C| \hat W)$,
using the same technique as Huang and Moulin \cite{HM2012TIFS}.
The optimal bias distribution for the tracer is the arcsine distribution, and the
optimal colluder strategy within each channel is Interleaving.
Moreover, it is optimal for the attackers to spread evenly over the channels.
Although the result is far from surprising,
the proof is less simple than one would have hoped for.
The proof needs some careful handling of expressions with different orders in~$1/k$ that arise
from different attack strategies for spreading out over the channels.
\item
We present an alternative payoff functional, namely the mutual information $I(\hat Y;\hat X_{\cal K}|\hat W)$.
Here $\hat X_{\cal K}$ stands for the part of the code matrices in all the TV channels that can potentially be
tuned into by the coalition~$\cal K$.
We argue that the two payoffs have the same maximin game asymptotically, and this leads us to conjecture that the optimal strategies hold for $I(\hat Y;\hat X_{\cal K}|\hat W)$ as well,
and that the fingerprinting capacity asymptotically behaves like $L^2/(k^2 2\ln2)$.
\end{itemize}

In Section~\ref{sec:model} we introduce the multiple TV channels model and the payoff function.
In Section~\ref{sec:theorem} we derive the maximin solution for the asymptotic payoff.
We discuss the alternative payoff and fingerprinting capacity in Section~\ref{sec:alt}.
We summarize and suggest future work in Section~\ref{sec:discussion}. Before we proceed, we introduce common notations used throughout the manuscript.
\subsection{Notation}
Let $m$ be the number of users; $\mathcal{M}\overset{\Delta}{=}\{1,2,\dots,m\}$ the index set for all users; $\mathcal{X}\overset{\Delta}{=}\{0,1,\dots,q-1\}$ denote the q-ary fingerprinting alphabet; $n$ denote the code length; $\mathcal{K}\overset{\Delta}{=} \{j_1,\dots,j_K\} \subset \mathcal{M}$ the index set of the coalition, where $K$ is the number of colluders; $k$ is the nominal coalition size\footnote{In this work, we do not make a distinction between $K$ and $k$, i.e. we assume the real number of colluders is known to the tracer.}; $L$ the number of TV channels; $\mathcal{L}\overset{\Delta}{=}\{1,2,\dots,L\}$ the index set for all TV channels; $\tilde{L}$ the number of TV channels a user can tune in to simultaneously; $(\cdot)_{\mathcal{K}}\overset{\Delta}{=} \{(\cdot)_j :j \in \mathcal{K}\}$; vectors are denoted by boldface letters; $L-$tuples are denoted by $\hat{(\cdot)}\overset{\Delta}{=} ((\cdot)^1,\dots,(\cdot)^L)$; $|(\cdot)|$ denotes the $L^1$-norm of a vector or the cardinality of a set, depending on the argument. We denote the Kronecker delta by $\delta(i,j)$, which is equal to 1 when $i=j$, and 0 otherwise. We use the following notation for asymptotic relations: Let $f(k)$ and $g(k)$ be two functions defined on the real numbers. $f(k)=O(g(k))$ if $\exists c >0, k^{*}>0$ such that $f(k) \leq c g(k)$, $\forall k \geq k^{*}$. $f(k)=o(g(k))$ if $f(k)/g(k)$ tends to 0. $f(k)=w(g(k))$ if $f(k)/g(k)$ tends to $\infty$. $f(k) \sim g(k)$ if $f(k)/g(k)$ tends to a non-zero constant. $f(k) \rightarrow g(k)$ if $f(k)/g(k)$ tends to 1.
\section{Channel Model}
\label{sec:model}

\subsection{Channel law}
In the single channel case \cite{HM2012TIFS}, the tracer produces codewords for $m$ users in a random fashion. This is achieved as follows: for each of the $n$ segments a bias vector $\mathbf{W}$ is drawn from a distribution $f_{\mathbf{W}}$ chosen by the tracer, and $X_j \in \{0,1,\dots,q-1\}$ is assigned for user $j$ according to a categorical distribution with parameters $\{W_i\}_{i=0}^{q-1}$:
\begin{equation}
\mathbb{P}(X_j=x|\mathbf{W}=\mathbf{w})=w_x.
\end{equation}
The users are assigned their codewords independently by the tracer, and hence we also have:
\begin{equation}
\mathbb{P}(X_{1}=x_{1},\dots,X_{m}=x_{m}|\mathbf{W}=\mathbf{w})=\prod_{j=1}^{m} w_{x_{j}}.
\end{equation}
Note that we do not include a segment index since this procedure is repeated for all segments independently, hence producing the codewords $\{\mathbf{X}_j\}_{j=1}^{m}$, where $\mathbf{X}_j=(X_{j, 1},\dots, X_{j, n})$.

We now describe an analogous procedure for producing the codewords of $m$ users in the case of multiple TV channels. We adopt the same notation as in the case of a single TV channel, and one can recover the single TV channel description by setting $L=1$ in what follows. The tracer produces codewords $\{\hat{\mathbf{X}}_1,\dots,\hat{\mathbf{X}}_m\}$ with $\mathbf{X}_j^l \in \{0,1,\dots,q-1\}^n$ denoting the $l$-th codeword for user $j$. This is done by choosing $L$ i.i.d bias vectors at each segment, and each segment is independent of other segments. That is, let $\hat{\mathbf{W}}_i$ be the bias vectors at segment $i$, then:
\begin{equation}
\begin{aligned}
f_{\hat{\mathbf{W}}_1,\dots,\hat{\mathbf{W}}_n}(\hat{\mathbf{w}}_1,\dots,\hat{\mathbf{w}}_n)&=\prod_{i=1}^nf_{\hat{\mathbf{W}}_i}(\hat{\mathbf{w}}_i)=\prod_{i=1}^n\prod_{l=1}^Lf_{\mathbf{W}^l_i}(\mathbf{w}^l_i)=\prod_{i=1}^n\prod_{l=1}^Lf_{\mathbf{W}}(\mathbf{w}^l_i),
\end{aligned}
\end{equation}
For any $l \in \{1,\dots,L\}$ and $i \in \{1,\dots,n\}$, $\{X^l_{j,i}\}_{j=1}^{m}$ are i.i.d and drawn from a categorical distribution:
\begin{equation}
\mathbb{P}(X^l_{1,i}=x_1,\dots,X^l_{m,i}=x_m|\mathbf{W}^l_{i}=\mathbf{w})=\prod_{j=1}^{m} w_{x_j}.
\end{equation}
The pirates receive codewords $\hat{\mathbf{X}}_{\mathcal{K}}$, and produce the output $\hat{\mathbf{Y}}$, where $\mathbf{Y}^l \in \mathcal{Y}^n$ (we assume an RDM setting, i.e. $\mathcal{Y}=\mathcal{X}$), according to a pmf $p_{\hat{\mathbf{Y}}|\hat{\mathbf{X}}_{\mathcal{K}},\hat{\mathbf{S}}}$, where $\{\hat{S}_i\}_{i=1}^n$ is the assignment of pirates to channels at each segment, i.e. $S^l_i$ is a random subset of $\mathcal{K}$ that we assume is independent of $X^l_{\mathcal{K}, i}$. Let $\tilde{L}$ be the maximum number of channels a single user can simultaneously tune in to\footnote{We assume pirates have the same accessability constraints as normal users, i.e. access to the same hardware.}, then for any segment $i$ and TV channel $l$, a realization $s^l_i$ is an assignment
\begin{equation}
s^l_i: l \mapsto 2^{\mathcal{K}},
\end{equation}
that respects the following:
\begin{align}\label{s}
  &s^l_i\neq \emptyset, \\
  &\bigcup_{l=1}^{L}s^l_i=\mathcal{K},\\
  \label{nomore}
  &\bigcap_{l\in \tilde{\mathcal{L}}}s_i^{l}=\emptyset, \hspace{1mm} \forall \tilde{\mathcal{L}} \subseteq \mathcal{L}, \ |\tilde{\mathcal{L}}| >\tilde{L}.
\end{align}
Equation \eqref{nomore} states that a single pirate can not be assigned to more than $\tilde{L}$ TV channels.
We further assume memoryless-ness and no feedback of the collusion channel, which implies:
\begin{equation}
\label{claw1}
p_{\hat{\mathbf{Y}}|\hat{\mathbf{X}}_{\mathcal{K}},\hat{\mathbf{S}}}(\hat{\mathbf{y}}|\hat{\mathbf{x}}_{\mathcal{K}},\hat{\mathbf{s}})=\prod_{i=1}^{n}p_{\hat{Y}|\hat{X}_{\mathcal{K}},\hat{S}}(\hat{y}_i|\hat{x}_{\mathcal{K},i},\hat{s}_i),
\end{equation}
and in what follows, we drop the segment index on random variables. In this work, we restrict to the case of each user having access to only one of the TV channels at each segment, i.e. $\tilde{L}=1$ (and so $K\geq L$). In this case $\{s^l\}_{l=1}^L$ form a partition of $\mathcal{K}$, or more precisely, $\hat{s}$ defines a weak ordering of $\mathcal{K}$ where pirates assigned to the same TV channel are tied. The size of the support for $\hat{S}$ (denoted by $\hat{\mathcal{S}}$) is then given by\footnote{$\stirling{K}{L}$ counts the number of ways we can partition $\mathcal{K}$ into $L$ non-empty subsets, while the $L!$ factor orders them across TV channels. Otherwise, it can be viewed as putting $K$ distinguishable balls into $L$ distinguishable bins.}:
\begin{align}
|\hat{\mathcal{S}}|=L!\,\stirling{K}{L},
\end{align}
where $\stirling{a}{b}$ denotes Stirling numbers of the second kind. If we instead disregard the identity of the colluders assigned to each channel, and only keep their numbers, the support size would shrink to\footnote{It can be viewed as putting $K$ indistinguishable balls into $L$ distinguishable bins, which can be easily proved using stars and bars.} $\binom{K-1}{L-1}$. Furthermore, if we also disregard the TV channel labels, the support size would become $P(K,L)$, which is the number of partitions of $K$ into exactly $L$ parts. It is common in the literature to assume colluder symmetry, i.e. all pirates share the risk equally.  Furthermore, it is logical to assume TV channel symmetry, i.e. all $L$ TV channels are equally important. Therefore in practice, finding an optimal distribution over $\{\hat{s}\}$ is the same as finding an optimal distribution over $P(K,L)$ partitions. Since $\tilde{L}=1$, i.e. $\{s^l\}_{l=1}^{L}$ are disjoint, then we can simplify equation \eqref{claw1} by writing:
\begin{equation}
\label{claw2}
p_{\hat{\mathbf{Y}}|\hat{\mathbf{X}}_{\mathcal{K}},\hat{\mathbf{S}}}(\hat{\mathbf{y}}|\hat{\mathbf{x}}_{\mathcal{K}},\hat{\mathbf{s}})=\prod_{i=1}^{n}\prod_{l=1}^{L}p_{Y^l|X^l_{S^l},S^l}(y^l_i|x^l_{s^l_i,i},s^l_i),
\end{equation}
i.e. colluders on different TV channels do not communicate after being assigned. Applying TV channel symmetry to \eqref{claw2}, i.e. removing any bias towards a particular TV channel in the pirates' strategy (it is true for the tracer, i.e. $\{X^l_{\mathcal{K}}\}_{l=1}^{L}$ are i.i.d.), then we can write:
\begin{equation}
\label{claw3}
p_{\hat{\mathbf{Y}}|\hat{\mathbf{X}}_{\mathcal{K}},\hat{\mathbf{S}}}(\hat{\mathbf{y}}|\hat{\mathbf{x}}_{\mathcal{K}},\hat{\mathbf{s}})=\prod_{i=1}^{n}\prod_{l=1}^{L}p_{Y|X_{S},S}(y^l_i|x^l_{s^l_i,i},s^l_i).
\end{equation}
Given $x_s$ (a realization of $X_S$), we define the tally vector $\mathbf{m}$ as the following:
\begin{align}
\mathbf{m} &\overset{\Delta}{=} (m_0, \dots, m_{q-1}),
\\
m_{\alpha}&\overset{\Delta}{=} |\{j \in s : x_j=\alpha\}|.
\end{align}
Imposing colluder symmetry, we can write:
\begin{equation}
\label{claw4}
p_{\hat{\mathbf{Y}}|\hat{\mathbf{X}}_{\mathcal{K}},\hat{\mathbf{S}}}(\hat{\mathbf{y}}|\hat{\mathbf{x}}_{\mathcal{K}},\hat{\mathbf{s}})=\prod_{i=1}^{n}\prod_{l=1}^{L}p_{Y|\mathbf{M}}(y^l_i|\mathbf{m}^l_i),
\end{equation}
where $\mathbf{m}^l_i$ is the tally vector received by the pirates assigned to the $l$-th TV channel at segment $i$, and we assume $p_{Y|\mathbf{M}}$ abides by the marking assumption. Note that unlike the single TV channel\footnote{In the single channel case this would be equal to $K$, which is usually unknown but constant.} case ($L=1$), $|\mathbf{M}^l|=|S^l| \overset{\Delta}{=} C^l$ is a random variable, which is the number of colluders assigned to attack TV channel $l$. Another implication of the TV channel symmetry condition is the following:
\begin{equation}
\label{csymm}
p_{C^l}(c^l)=p_{C}(c^l),
\end{equation}
i.e. the number of pirates assigned to attack different TV channels must be identically distributed. This follows since $p_{\hat{C}}$ together with $p_{Y|\mathbf{M}}$ form the pirates' strategy. Figure \ref{fig} showcases the process of producing the colluders outputs across the $L$ TV channels. Note that from colluder symmetry, $p_{\hat{S}|\hat{C}}$ is specified by the rule:
\begin{equation}
p_{\hat{S}|\hat{C}}(\hat{s}|\hat{c})=\begin{cases}
\dfrac{1}{\binom{k}{c^1, c^2, \dots, c^L}},& |s^l|=c^l, \ \forall l \in \mathcal{L}\\
0,& \mathrm{otherwise}
\end{cases},
\end{equation}
where $\binom{k}{c^1, c^2, \dots, c^L}$ is the multinomial coefficient.
\begin{lemma}
\label{mlemma}
  $\{\mathbf{M}^l\}_{l=1}^L$ are identically distributed.
\end{lemma}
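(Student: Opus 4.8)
The plan is to exhibit, for every channel label $l$, the same two-stage description of the law of $\mathbf{M}^l$: first the occupancy $C^l$ is drawn, and then, conditioned on $C^l=c$, the tally $\mathbf{M}^l$ is the symbol tally of $c$ draws from the categorical distribution whose parameter is itself drawn from $f_{\mathbf{W}}$. Since neither stage depends on $l$, the laws of the $\mathbf{M}^l$ coincide.

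First I would condition on $C^l=c$. By definition $m^l_\alpha=|\{j\in S^l:X^l_j=\alpha\}|$, so $\mathbf{M}^l$ is a function of $S^l$ and of the entries $\{X^l_j\}_{j\in\mathcal{K}}$ of the $l$-th code matrix. We use three facts from the channel law: (i) $S^l$ is independent of $X^l_{\mathcal{K}}$; (ii) given $\mathbf{W}^l=\mathbf{w}$ the entries $\{X^l_j\}_{j\in\mathcal{K}}$ are i.i.d.\ categorical with parameter $\mathbf{w}$, hence the family $\{X^l_j\}_{j\in\mathcal{K}}$ is exchangeable; and (iii) $\mathbf{W}^l\sim f_{\mathbf{W}}$ with the same marginal for all $l$. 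By exchangeability, the tally of $X^l_A$ for a fixed subset $A\subseteq\mathcal{K}$ depends on $A$ only through $|A|$, so averaging over the conditional law of $S^l$ given $C^l=c$ — which is supported on $c$-subsets of $\mathcal{K}$ — and then over $\mathbf{W}^l\sim f_{\mathbf{W}}$, the conditional law $p_{\mathbf{M}^l\mid C^l}(\,\cdot\mid c)$ is exactly the tally law of $c$ i.i.d.\ categorical draws with random parameter $\mathbf{W}\sim f_{\mathbf{W}}$. This depends only on $c$, on $q$, and on $f_{\mathbf{W}}$, not on $l$; call it $g_c$.

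It then remains to remove the conditioning. By \eqref{csymm} we have $p_{C^l}=p_C$ for every $l$, so $p_{\mathbf{M}^l}(\mathbf{m})=\sum_{c}p_C(c)\,g_c(\mathbf{m})$, which is the same function of $\mathbf{m}$ for every $l\in\mathcal{L}$. This proves the lemma.

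The step that needs care is the first one. A priori $\hat{S}$ is a partition of $\mathcal{K}$, so the $S^l$ are correlated across channels, and once the other occupancies are marginalised $S^l$ need not be uniform over $c$-subsets; one must make sure this joint structure does not leak into the law of $\mathbf{M}^l$. The resolution is precisely the combination of $S^l\perp X^l_{\mathcal{K}}$ with the exchangeability of $\{X^l_j\}_{j\in\mathcal{K}}$: the tally of a random subset of an exchangeable family depends on that subset only through its cardinality, so the awkward joint law of $\hat{S}$ is irrelevant beyond the support constraint $|s^l|=c^l$; in particular the explicit uniform form of $p_{\hat{S}\mid\hat{C}}$ is not needed for this lemma.
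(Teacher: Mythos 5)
Your proof is correct and follows essentially the same route as the paper's: condition on $(C^l,\mathbf{W}^l)$, observe that the conditional tally law is multinomial independently of $l$, then average using $p_{C^l}=p_C$ and the common marginal $f_{\mathbf{W}}$. The only difference is that you spell out, via exchangeability of $\{X^l_j\}_{j\in\mathcal{K}}$ and the independence $S^l\perp X^l_{\mathcal{K}}$, why the conditional law is multinomial regardless of $l$ --- a fact the paper simply asserts in its step $\mathrm{(a)}$.
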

\begin{proof}\footnote{Summations over $\mathbf{w}$ can be appropriately replaced by integrals in the case of $f_{\mathbf{W}}$ being a density.}
\begin{equation}
\begin{aligned}
&p_{\mathbf{M}^l}(\mathbf{m})=\sum_{\mathbf{w}}p_{\mathbf{M}^l|\mathbf{W}^l}(\mathbf{m}|\mathbf{w})f_{\mathbf{W}}(\mathbf{w})
\\
&=\sum_{c}\sum_{\mathbf{w}}p_{C^l}(c)p_{\mathbf{M}^l|\mathbf{W}^l, C^l}(\mathbf{m}|\mathbf{w}, c)f_{\mathbf{W}}(\mathbf{w})
\\
&=\sum_{c}\sum_{\mathbf{w}}p_{C}(c)p_{\mathbf{M}^l|\mathbf{W}^l, C^l}(\mathbf{m}|\mathbf{w}, c)f_{\mathbf{W}}(\mathbf{w})
\\
&\overset{\mathrm{(a)}}{=}\sum_{c}\sum_{\mathbf{w}}p_{C}(c)p_{\mathbf{M}|\mathbf{W}, C}(\mathbf{m}|\mathbf{w}, c)f_{\mathbf{W}}(\mathbf{w})=p_{\mathbf{M}}(\mathbf{m}),
\end{aligned}
\end{equation}
where in $\mathrm{(a)}$ we used the fact that $p_{\mathbf{M}^l|\mathbf{W}^l, C^l}$ is a multinomial distribution regardless of $l$.
\end{proof}
\begin{lemma}
\label{ylemma}
$\{Y^l\}_{l=1}^L$ are identically distributed.
\end{lemma}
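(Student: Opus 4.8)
The plan is to reduce this to Lemma~\ref{mlemma} together with the channel law \eqref{claw4}. The essential observation is that, conditioned on the full tally tuple $\hat{\mathbf{M}}=(\mathbf{M}^1,\dots,\mathbf{M}^L)$, the outputs $Y^1,\dots,Y^L$ factorize, with each $Y^l$ depending only on $\mathbf{M}^l$ through the kernel $p_{Y|\mathbf{M}}$, which carries no channel index. Hence $Y^l$ is obtained from $\mathbf{M}^l$ alone by pushing it through a fixed Markov kernel, and since by Lemma~\ref{mlemma} each $\mathbf{M}^l$ has the same law as $\mathbf{M}$, applying the same kernel yields the same law for each $Y^l$.

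Concretely, first I would write
\begin{equation}
p_{Y^l}(y)=\sum_{\mathbf{m}}p_{Y^l|\mathbf{M}^l}(y|\mathbf{m})\,p_{\mathbf{M}^l}(\mathbf{m}),
\end{equation}
and then argue that $p_{Y^l|\mathbf{M}^l}(y|\mathbf{m})=p_{Y|\mathbf{M}}(y|\mathbf{m})$ independently of $l$. This is where the product form \eqref{claw4} is used: summing out the outputs on all the other channels, together with all the randomness in $\hat{\mathbf{X}}_{\mathcal{K}},\hat{\mathbf{S}}$ consistent with $\mathbf{M}^l=\mathbf{m}$, the factor $p_{Y|\mathbf{M}}(y^l|\mathbf{m}^l)$ separates out while the remaining factors sum to one. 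Then, invoking Lemma~\ref{mlemma} to replace $p_{\mathbf{M}^l}$ by $p_{\mathbf{M}}$, we obtain
\begin{equation}
p_{Y^l}(y)=\sum_{\mathbf{m}}p_{Y|\mathbf{M}}(y|\mathbf{m})\,p_{\mathbf{M}}(\mathbf{m})=p_Y(y),
\end{equation}
which is manifestly independent of $l$.

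The only delicate point — and the step I would write out most carefully — is the conditional-independence claim $p_{Y^l|\mathbf{M}^l}=p_{Y|\mathbf{M}}$. It is intuitively clear from \eqref{claw4}, but one must be careful that $\mathbf{M}^l$ and $\mathbf{M}^{l'}$ are \emph{not} independent (they are coupled through $\hat{C}$ and $\hat{S}$), so the reduction cannot be carried out at the level of the marginal distributions of the tallies; it has to be done at the level of the conditional kernel, exploiting that given $\hat{\mathbf{M}}$ the $Y^l$'s are conditionally independent and each depends on its own tally only. Once that is in place, the remainder is the one-line substitution using Lemma~\ref{mlemma}.
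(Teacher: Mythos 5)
Your proposal is correct and follows essentially the same route as the paper: marginalize $Y^l$ over $\mathbf{M}^l$, observe that the kernel $p_{Y^l|\mathbf{M}^l}=p_{Y|\mathbf{M}}$ carries no channel index (TV channel symmetry via \eqref{claw4}), and invoke Lemma~\ref{mlemma} to replace $p_{\mathbf{M}^l}$ by $p_{\mathbf{M}}$. Your extra care about the cross-channel coupling of the tallies is a reasonable elaboration but does not change the argument.
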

\begin{proof}
\begin{equation}
\begin{aligned}
p_{Y^l}(y^l)=\sum_{\mathbf{m}^l}p_{Y^l|\mathbf{M}^l}(y^l|\mathbf{m}^l)p_{\mathbf{M}^l}(\mathbf{m}^l)=\sum_{\mathbf{m}^l}p_{Y|\mathbf{M}}(y^l|\mathbf{m}^l)p_{\mathbf{M}}(\mathbf{m}^l)=p_Y(y^l),
\end{aligned}
\end{equation}
where the equalities follow from {Lemma \ref{mlemma}} and TV channel symmetry ($p_{Y|\mathbf{M}}$ is part of the pirates' strategy).
\end{proof}
\begin{figure}[!ht]
\centerline{\includegraphics[scale=.4, trim={0cm 4cm 2cm 0cm}]{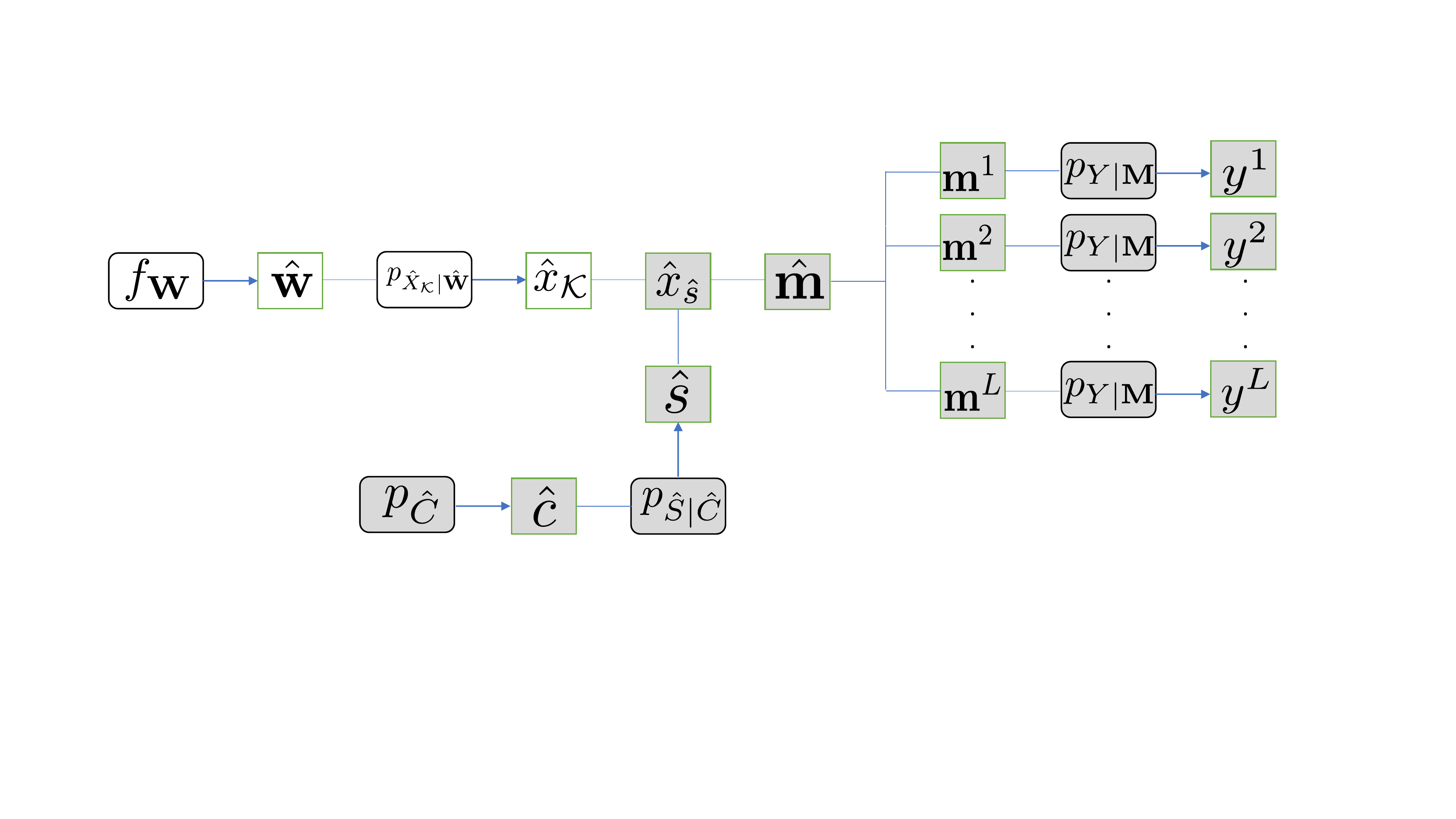}}
\caption{\textit{A schematic diagram that shows the watermarking/collusion process. Arrows indicate drawing realizations from a pmf, while straight lines indicate use of those realizations. White boxes correspond to the tracer, while grey boxes correspond to the coalition.}}
\label{fig}
\end{figure}

\subsection{Payoff functional}
For any $l \in \{1,\dots,L\}$, we define the single channel payoff as:
\begin{equation}
\label{scpayoff}
\dfrac{I(Y^l;\mathbf{M}^l|C^l=c, \mathbf{W}^l)}{c}=\dfrac{I(Y;\mathbf{M}|C=c, \mathbf{W})}{c}
\overset{\Delta}{=}
\mathcal{I}_{c}(\mathbf{W}, p_{Y|\mathbf{M}, C=c}),
\end{equation}
where equality follows from {Lemma \ref{mlemma}}, {Lemma \ref{ylemma}}, and $\{\mathbf{W}^l\}_{l=1}^L$ being i.i.d. In analogy to the single channel case \cite{HM2012TIFS}, we define the multi channel payoff as the following:
\begin{equation}
\label{multipayoff}
J_{k, L}[f_{\mathbf{W}}, p_{Y|\mathbf{M}}, p_{\hat{C}}]\overset{\Delta}{=} \dfrac{I(\hat{Y};\hat{\mathbf{M}}|\hat{\mathbf{W}})}{k}.
\end{equation}
From here on, we restrict our attention to the binary alphabet case $\mathcal{X}=\{0,1\}=\mathcal{Y}$, and we wish to evaluate $J_{k, L}[f_{\mathbf{W}}, p_{Y|\mathbf{M}}, p_{\hat{C}}]$ for this case. We define the random variables $\{Z^l\}_{l=1}^{L}$ by:
\begin{align}
Z^l\overset{\Delta}{=}\sum_{j \in S^l}X^l_j,
\end{align}
which counts the number of 1s received by the pirates assigned to the $l$-th TV channel. Equation \eqref{csymm} and {Lemma \ref{mlemma}} imply that $\{Z^l\}_{l=1}^{L}$ and $\{C^l\}_{l=1}^{L}$ are sets of identically distributed random variables. Therefore, we can talk about $Z$ and $C$ with no reference to a TV channel label. We also use the scalar bias $W$ instead of $\mathbf{W}$, which is the probability of assigning a user the symbol 1. Using this notation, we define the following:
\begin{align}
&p_{Z|C,W}(z|c,w)\overset{\Delta}{=} \alpha_z(w,c)=\begin{cases}\binom{c}{z}w^z(1-w)^{c-z},& 0\leq z \leq c \\0,& \text{otherwise} \end{cases},\\
&p_{Z,C|W}(z,c|w)=p_{C|W}(c|w)p_{Z|C,W}(z|c,w)=p_C(c)\alpha_z(c,w),\\
&\pi_{z,c}\overset{\Delta}{=} p_{Y|Z,C}(1|z,c), \quad 0\leq z \leq c.
\end{align}
Note that $(Z^l, C^l)$ is a sufficient statistic for determining $Y^l$, while $(W^l, C^l)$ is a sufficient statistic for determining $Z^l$, therefore we have the following (keeping in mind that $\{W^l\}_{l=1}^L$ are i.i.d.):
\begin{equation}
\label{someeq1}
\begin{aligned}
p_{\hat{Y}|\hat{Z}, \hat{C}, \hat{W}}(\hat{y}|\hat{z}, \hat{c}, \hat{w})&=\prod_{l=1}^L p_{Y^l|Z^l, C^l}(y^l|z^l, c^l)\\
&=\prod_{l=1}^L p_{Y^l|Z^l, C^l, W^l}(y^l|z^l, c^l, w^l)\\
&=\prod_{l=1}^Lp_{Y|Z, C, W}(y^l|z^l, c^l, w^l),
\end{aligned}
\end{equation}
\begin{equation}
\label{someeq2}
p_{\hat{Z}|\hat{C}, \hat{W}}(\hat{z}|\hat{c}, \hat{w})=\prod_{l=1}^L p_{Z^l|C^l, W^l}(z^l| c^l, w^l)
=\prod_{l=1}^L\alpha_{z^l}(w^l, c^l).
\end{equation}
\begin{lemma}
\label{somelemma}
$p_{\hat{Y}|\hat{C}, \hat{W}}(\hat{y}|\hat{c}, \hat{w})=\prod_{l=1}^{L}p_{Y|C, W}(y^l|c^l, w^l)$.
\end{lemma}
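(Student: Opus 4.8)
The plan is to derive the claimed product form by marginalising over the intermediate variable $\hat Z$ and exploiting the two factorisations already recorded in \eqref{someeq1} and \eqref{someeq2}. Since $(\hat Z,\hat C)$ together with $\hat W$ determines the law of $\hat Y$, and $\hat Z$ is itself generated from $(\hat C,\hat W)$, the law of total probability gives
\begin{equation*}
p_{\hat Y|\hat C,\hat W}(\hat y|\hat c,\hat w)=\sum_{\hat z}p_{\hat Y|\hat Z,\hat C,\hat W}(\hat y|\hat z,\hat c,\hat w)\,p_{\hat Z|\hat C,\hat W}(\hat z|\hat c,\hat w).
\end{equation*}
Substituting the right-hand sides of \eqref{someeq1} and \eqref{someeq2} rewrites the summand as a product over $l$ of the per-channel quantities $p_{Y|Z,C,W}(y^l|z^l,c^l,w^l)\,\alpha_{z^l}(w^l,c^l)$.

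Next I would carry out the sum. The key observation is that the sum over $\hat z=(z^1,\dots,z^L)$ is a sum over the $L$ indices $z^1,\dots,z^L$ separately, each $z^l$ ranging over $\{0,\dots,c^l\}$, and that the summand is a product whose $l$-th factor depends only on $z^l$. Hence the distributivity identity $\sum_{z^1,\dots,z^L}\prod_l a_l(z^l)=\prod_l\sum_{z^l}a_l(z^l)$ applies and the multi-index sum factors as
\begin{equation*}
\prod_{l=1}^{L}\;\sum_{z^l=0}^{c^l}p_{Y|Z,C,W}(y^l|z^l,c^l,w^l)\,\alpha_{z^l}(w^l,c^l).
\end{equation*}
Recognising $\alpha_{z^l}(w^l,c^l)=p_{Z|C,W}(z^l|c^l,w^l)$ and summing out $z^l$ inside each factor collapses the $l$-th factor to $p_{Y|C,W}(y^l|c^l,w^l)$, which is exactly the claim.

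I do not expect a serious obstacle: the statement is essentially that conditional independence of $\hat Z$ across channels given $(\hat C,\hat W)$, combined with the channel-local dependence of $Y^l$ on $(Z^l,C^l,W^l)$, propagates to conditional independence of $\hat Y$ across channels. The only point needing a little care is that in \eqref{someeq1} the channel-$l$ factor genuinely involves no data from channels $l'\neq l$ — this is where \eqref{claw2}/\eqref{claw3} (no post-assignment communication between colluders on different channels) and the i.i.d.\ structure of $\{W^l\}$ enter — so that the interchange of the sum and the product is legitimate.
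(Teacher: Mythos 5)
Your proposal is correct and follows essentially the same route as the paper's own proof: marginalise over $\hat Z$, substitute the factorisations \eqref{someeq1} and \eqref{someeq2}, and factor the multi-index sum into a product of per-channel sums, each of which collapses to $p_{Y|C,W}(y^l|c^l,w^l)$. No gaps.
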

\begin{proof}
\begin{equation}
\label{someeq3}
\begin{aligned}
&p_{\hat{Y}|\hat{C}, \hat{W}}(\hat{y}|\hat{c}, \hat{w})=\sum_{\hat{z}}p_{\hat{Y}|\hat{C}, \hat{Z}, \hat{W}}(\hat{y}|\hat{c}, \hat{z}, \hat{w})p_{\hat{Z}|\hat{C}, \hat{W}}(\hat{z}|\hat{c}, \hat{w})\\
&=\sum_{z^1}\cdots \sum_{z^L}\prod_{l=1}^L p_{Y^l|Z^l, C^l, W^l}(y^l|z^l, c^l, w^l)p_{Z^l|C^l, W^l}(z^l| c^l, w^l)\\
&=\sum_{z^1}\cdots \sum_{z^L}\prod_{l=1}^L p_{Y, Z|C, W}(y^l, z^l|c^l, w^l)=\prod_{l=1}^L p_{Y|C, W}(y^l|c^l, w^l).
\end{aligned}
\end{equation}
\end{proof}
\begin{lemma}
\label{scpaylemma}
$I(\hat{Y}; \hat{Z}|\hat{C}=\hat{c}, \hat{W})=\sum_{l=1}^L c^l\mathcal{I}_{c^l}(W, \bm{\pi}_{c^l})$, $\bm{\pi}_c\overset{\Delta}{=}(\pi_{0,c},\cdots,\pi_{c,c})$.
\end{lemma}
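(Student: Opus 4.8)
The idea is to exploit two facts: in the binary case the tally $\mathbf{M}^l$ is equivalent to the scalar $Z^l$ once $C^l$ is fixed, and conditioning on $\hat{C}=\hat{c}$ leaves the $L$ channels conditionally independent, so the multi-channel mutual information splits as a sum over channels.

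First I would reduce the claimed right-hand side to single-channel mutual informations. In the binary case $\mathbf{M}^l=(c^l-Z^l,Z^l)$, which given $C^l=c^l$ is a deterministic bijection of $Z^l$, and the bias vector $\mathbf{W}^l$ is in bijection with the scalar $W^l$; since mutual information is invariant under such relabellings, $I(Y^l;\mathbf{M}^l|C^l=c^l,\mathbf{W}^l)=I(Y^l;Z^l|C^l=c^l,W^l)$. Moreover $p_{Y|Z,C=c}$ is exactly the channel with parameters $\bm{\pi}_c=(\pi_{0,c},\dots,\pi_{c,c})$, and by Lemmas~\ref{mlemma} and \ref{ylemma} together with the i.i.d.\ structure of the biases the joint law of $(Y^l,Z^l,C^l,W^l)$ does not depend on $l$. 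Hence by \eqref{scpayoff}, $c^l\,\mathcal{I}_{c^l}(W,\bm{\pi}_{c^l})=I(Y^l;Z^l|C^l=c^l,W^l)$, so it suffices to prove $I(\hat{Y};\hat{Z}|\hat{C}=\hat{c},\hat{W})=\sum_{l=1}^L I(Y^l;Z^l|C^l=c^l,W^l)$.

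Next I would expand the left-hand side over $\hat{W}$. Since $p_{\hat{C}}$ is a fixed part of the coalition's strategy with no dependence on the biases, $\hat{C}\perp\hat{W}$, so conditioning on $\hat{C}=\hat{c}$ leaves $\hat{W}$ with its product law $\prod_{l} f_W(w^l)$ and $I(\hat{Y};\hat{Z}|\hat{C}=\hat{c},\hat{W})=\mathbb{E}_{\hat{W}}\big[\,I(\hat{Y};\hat{Z}|\hat{C}=\hat{c},\hat{W}=\hat{w})\,\big]$. For a fixed $\hat{w}$, combining \eqref{someeq1} and \eqref{someeq2} gives the product form $p_{\hat{Y},\hat{Z}|\hat{C},\hat{W}}(\hat{y},\hat{z}|\hat{c},\hat{w})=\prod_{l=1}^{L}p_{Y,Z|C,W}(y^l,z^l|c^l,w^l)$, i.e.\ the pairs $(Y^l,Z^l)$ are mutually independent across channels in this conditional world. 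I would then apply the standard additivity of mutual information over independent blocks: if $(A^l,B^l)_{l=1}^{L}$ are mutually independent pairs, then $H(\hat{B})=\sum_{l}H(B^l)$ and, by the chain rule and block independence, $H(\hat{B}|\hat{A})=\sum_{l}H(B^l|A^l)$, whence $I(\hat{A};\hat{B})=\sum_{l}I(A^l;B^l)$. Taking $A^l=Y^l$ and $B^l=Z^l$ yields $I(\hat{Y};\hat{Z}|\hat{C}=\hat{c},\hat{W}=\hat{w})=\sum_{l}I(Y^l;Z^l|C^l=c^l,W^l=w^l)$; taking $\mathbb{E}_{\hat{W}}$ and using the product law to pull the channel sum outside the expectation turns this into $\sum_{l}I(Y^l;Z^l|C^l=c^l,W^l)$, which by the first step equals $\sum_{l}c^l\,\mathcal{I}_{c^l}(W,\bm{\pi}_{c^l})$, as required.

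The main point needing care is the two-layered conditioning — $\hat{C}=\hat{c}$ is a fixed realisation while $\hat{W}$ is averaged over — and the interchange of the channel sum with $\mathbb{E}_{\hat{W}}$; this is legitimate precisely because $\hat{W}$ has i.i.d.\ coordinates that are independent of $\hat{C}$, the very property that also lets us write $\mathcal{I}_{c^l}$ with the single scalar bias $W$ rather than a channel-indexed one.
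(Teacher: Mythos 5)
Your proposal is correct and follows essentially the same route as the paper: both rest on the product form of $p_{\hat{Y},\hat{Z}|\hat{C},\hat{W}}$ obtained from \eqref{someeq1} and \eqref{someeq2} (the content of Lemma~\ref{somelemma}) and then use additivity of mutual information over conditionally independent channels, the paper deriving this additivity by expanding conditional entropies while you invoke it as a standard fact. Your extra care in identifying $c^l\,\mathcal{I}_{c^l}(W,\bm{\pi}_{c^l})$ with $I(Y^l;Z^l|C^l=c^l,W^l)$ via the bijection $\mathbf{M}^l\leftrightarrow Z^l$ is a welcome explicitness but not a different argument.
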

\begin{proof}
\begin{equation}
\begin{aligned}
&I(\hat{Y};\hat{Z}|\hat{C}=\hat{c}, \hat{W}=\hat{w})\overset{\mathrm{(a)}}{=}H(\hat{Y}|\hat{C}=\hat{c}, \hat{W}=\hat{w})-H(\hat{Y}|\hat{Z}, \hat{C}=\hat{c}, \hat{W}=\hat{w})&\\
&\overset{\mathrm{(b)}}{=}H(\hat{Y}|\hat{C}=\hat{c}, \hat{W}=\hat{w})-\sum_{\hat{z}}p_{\hat{Z}|\hat{C}, \hat{W}}(\hat{z}|\hat{c}, \hat{w}) H(\hat{Y}|\hat{Z}=\hat{z}, \hat{C}=\hat{c}, \hat{W}=\hat{w})\\
&\overset{\mathrm{(c)}}{=}\sum_{l=1}^LH(Y|C=c^l, W=w^l)
\\
&-\sum_{\hat{z}}\left(\prod_{l'=1}^L p_{Z|C, W}(z^{l'}| c^{l'}, w^{l'})\right)\sum_{l=1}^L H(Y|Z=z^l, C=c^l, W=w^l)
\\
&\overset{\mathrm{(d)}}{=}\sum_{l=1}^LH(Y|C=c^l, W=w^l)
-\sum_{l=1}^L\sum_{z^l}p_{Z|C, W}(z^{l}| c^{l}, w^{l})H(Y|Z=z^l, C=c^l, W=w^l)\\
&\overset{\mathrm{(e)}}{=}\sum_{l=1}^L I(Y;Z|C=c^l, W=w^l),
\end{aligned}
\end{equation}
where $\mathrm{(a)}$ and $\mathrm{(b)}$ follow from the definitions of conditional mutual information and conditional entropy; $\mathrm{(c)}$ follows from equations \eqref{someeq1}, \eqref{someeq2}, and {Lemma \ref{somelemma}}; $\mathrm{(d)}$ follows by direct computation; $\mathrm{(e)}$ follows from the definition of conditional mutual information. Finally, it is straightforward to confirm that
\begin{equation}
I(\hat{Y};\hat{Z}|\hat{C}=\hat{c}, \hat{W})=\sum_{l=1}^L I(Y;Z|C=c^l, W).
\end{equation}
\end{proof}
\begin{lemma}
$\mathcal{I}_c(W, \bm{\pi}_{c})$ coincides with the single channel payoff for $c$ pirates as defined in \cite{HM2012TIFS}.
\end{lemma}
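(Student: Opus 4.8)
The plan is to verify that $\mathcal I_c(W,\bm\pi_c)$, as built up in \eqref{scpayoff} and Lemma~\ref{scpaylemma}, is --- after a purely notational translation --- literally the per-colluder single-channel payoff of \cite{HM2012TIFS}. The translation dictionary is: the single-channel nominal coalition size $k$ becomes the per-channel occupancy $c$; the binary coalition tally vector $\mathbf M=(m_0,m_1)$ becomes the scalar count of ones $Z=m_1$; and the two-component bias $\mathbf W=(1-W,W)$ becomes the scalar bias $W$ (the probability that a user is assigned the symbol $1$). Under this dictionary I claim every ingredient in \eqref{scpayoff} is matched by an ingredient of the Huang--Moulin definition, so the two quantities coincide.

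I would carry this out in three short steps. First, in the binary alphabet the tally received by a group of $c$ colluders satisfies $\mathbf M=(c-Z,Z)$, a deterministic bijection with $Z$; hence $I(Y;\mathbf M\mid C=c,\mathbf W)=I(Y;Z\mid C=c,\mathbf W)$, and a collusion map $p_{Y\mid\mathbf M,C=c}$ is nothing but the vector $\bm\pi_c=(\pi_{0,c},\dots,\pi_{c,c})$ with $\pi_{z,c}=p_{Y\mid Z,C}(1\mid z,c)$, the marking assumption pinning $\pi_{0,c}=0$ and $\pi_{c,c}=1$ exactly as for the admissible single-channel attacks of \cite{HM2012TIFS}. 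Second, by the definition of $\alpha_z$ in the model, $p_{Z\mid C=c,W}(z\mid c,w)=\alpha_z(w,c)=\binom{c}{z}w^z(1-w)^{c-z}$ is precisely the symbol-tally law of a single-channel code of coalition size $c$ and bias $w$, so conditioning the multichannel model on $C=c$ reproduces the single-channel model with $k=c$; together with the first step this identifies the joint law of $(Y,Z,W)$ given $C=c$ with the single-channel joint law of output, tally and bias at coalition size $c$. Third, $W$ and $\mathbf W=(1-W,W)$ carry the same information, so $I(Y;Z\mid C=c,W)=I(Y;Z\mid C=c,\mathbf W)$, and dividing by $c$ supplies exactly the per-colluder normalization used in \cite{HM2012TIFS} with $k=c$. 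Chaining these three equalities gives $\mathcal I_c(W,\bm\pi_c)=\tfrac1c\,I(Y;\mathbf M\mid C=c,\mathbf W)$, which is the single-channel payoff for $c$ pirates.

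I do not expect a genuine obstacle: the statement is bookkeeping, and the only real work is to phrase it precisely enough that the relation is an identity rather than a loose analogy. The point worth spelling out is that the averaging over the bias agrees in the two formulations --- in both cases the mutual information is taken with $W\sim f_W$ (equivalently $\mathbf W\sim f_{\mathbf W}$) drawn from the tracer's distribution --- and that the $1/c$ factor in \eqref{scpayoff} is the same normalization convention as the $1/k$ factor in \cite{HM2012TIFS}; once these conventions are lined up, Lemma~\ref{scpaylemma}, which has already rewritten $I(\hat Y;\hat Z\mid\hat C=\hat c,\hat W)$ as a sum of single-channel terms $\sum_{l} c^l\,\mathcal I_{c^l}(W,\bm\pi_{c^l})$, makes the identification immediate.
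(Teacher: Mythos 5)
Your proposal is correct and reaches the same conclusion as the paper, whose own proof is a one-line observation that $\mathcal{I}_c(W,\bm{\pi}_c)$ depends only on $f_W$ and $\bm{\pi}_c$ (and not on $p_{\hat{C}}$ or the other $\{\bm{\pi}_{c'}:c'\neq c\}$), hence is exactly the single-channel object for $c$ pirates. You simply make explicit the distributional bookkeeping (binomial tally law given $C=c$, the bijection $\mathbf{M}\leftrightarrow Z$, the marking assumption, and the $1/c$ normalization) that the paper leaves implicit, so the approach is essentially the same, just more detailed.
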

\begin{proof}
The pirate strategy consists of $p_{\hat{C}}$, as well as the vectors $\{\bm{\pi}_{c'}\}_{c'=1}^{k-L+1}$. Note that $\mathcal{I}_c(W, \bm{\pi}_{c})$ is independent of $p_{\hat{C}}$ and all $\{\bm{\pi}_{c'}: c' \neq c\}$, i.e. it only depends on $f_W$ and $\bm{\pi}_{c}$.
\end{proof}
\begin{lemma}
\label{ineqlemma}
$2 \ln 2 \hspace{0.5mm}c^2 \mathcal{I}_c(W, \bm{\pi}_{c}) \geq 4\left[\mathlarger{\int}_0^1 \dfrac{dw}{f_W(w)w(1-w)}\right]^{-1}$.
\end{lemma}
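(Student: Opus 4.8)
The plan is to prove the equivalent, cross-multiplied inequality
\[
2\ln 2\; c^2\,\mathcal{I}_c(W,\bm{\pi}_{c})\cdot\int_0^1\frac{dw}{f_W(w)\,w(1-w)}\;\geq\;4 ,
\]
which is trivially true unless $f_W$ is absolutely continuous (otherwise the integral is $+\infty$), so one may assume $f_W$ has a density. Throughout I would condition on $C=c$, abbreviate $I(w):=I(Y;Z\mid C=c,W=w)$, and set $g(w):=p_{Y\mid W,C}(1\mid w,c)=\sum_{z=0}^c\alpha_z(w,c)\,\pi_{z,c}$, so that $c^2\mathcal{I}_c(W,\bm{\pi}_c)=c\,I(Y;Z\mid C=c,W)=c\int_0^1 f_W(w)\,I(w)\,dw$ by the definition \eqref{scpayoff}.

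First I would establish the pointwise bound
\[
I(w)\;\geq\;\frac{2\,(g'(w))^2\,w(1-w)}{c\,\ln 2}.
\]
Writing $I(w)=\sum_{z}\alpha_z(w,c)\,D\bigl(\mathrm{Ber}(\pi_{z,c})\,\|\,\mathrm{Ber}(g(w))\bigr)$ and applying Pinsker's inequality $D(\mathrm{Ber}(p)\|\mathrm{Ber}(q))\geq \tfrac{2}{\ln 2}(p-q)^2$ termwise gives $I(w)\geq\tfrac{2}{\ln 2}\,\mathrm{Var}(\pi_{Z,c}\mid W=w)$ (since $g(w)=\mathbb{E}[\pi_{Z,c}\mid W=w]$). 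Differentiating the binomial pmf, $\partial_w\alpha_z(w,c)=\alpha_z(w,c)\tfrac{z-cw}{w(1-w)}$, hence $g'(w)=\tfrac{1}{w(1-w)}\,\mathrm{Cov}(Z,\pi_{Z,c}\mid W=w)$; Cauchy--Schwarz together with $\mathrm{Var}(Z\mid W=w)=cw(1-w)$ then yields $(g'(w))^2 w(1-w)\leq c\,\mathrm{Var}(\pi_{Z,c}\mid W=w)$, and combining the two displays gives the claimed pointwise bound.

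Next I would integrate. Taking square roots, $\sqrt{I(w)/(w(1-w))}\geq\sqrt{2/(c\ln 2)}\,|g'(w)|$, and since $g$ is a polynomial on $[0,1]$ with $g(0)=\pi_{0,c}=0$ and $g(1)=\pi_{c,c}=1$ by the marking assumption,
\[
\int_0^1\sqrt{\frac{I(w)}{w(1-w)}}\,dw\;\geq\;\sqrt{\tfrac{2}{c\ln 2}}\int_0^1|g'(w)|\,dw\;\geq\;\sqrt{\tfrac{2}{c\ln 2}}\;\bigl|g(1)-g(0)\bigr|\;=\;\sqrt{\tfrac{2}{c\ln 2}} .
\]
Finally, Cauchy--Schwarz in the form $\Bigl(\int_0^1 f_W I\,dw\Bigr)\Bigl(\int_0^1\tfrac{dw}{f_W w(1-w)}\Bigr)\geq\Bigl(\int_0^1\sqrt{I/(w(1-w))}\,dw\Bigr)^2\geq\tfrac{2}{c\ln 2}$, multiplied through by $2\ln 2\,c$, produces the cross-multiplied inequality and hence the lemma.

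The one genuinely nontrivial step is the pointwise bound: the hard part is recognising that \emph{Pinsker} (rather than a $\chi^2$- or Hellinger-type estimate, which would fail here because $g(w)$ can approach $0$ or $1$) is exactly what makes $\mathrm{Var}(\pi_{Z,c}\mid w)$ appear with the correct constant $2/\ln 2$, and that this variance is controlled from below by $g'(w)$ through the elementary derivative identity for the binomial pmf. Everything after that — the two Cauchy--Schwarz applications, the estimate $\int_0^1|g'|\geq|g(1)-g(0)|$, and the endpoint values of $g$ forced by the marking assumption — is routine, together with a one-line remark disposing of the case in which $f_W$ fails to be a density.
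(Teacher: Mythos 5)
Your proof is correct. Note, however, that the paper does not actually prove this lemma at all: its ``proof'' is a one-line citation of \textit{Theorem 7} in \cite{HM2012TIFS}, so what you have done is reconstruct, essentially from scratch, the argument behind the cited result. The chain you use --- the decomposition $I(w)=\sum_z\alpha_z(w,c)\,D(\mathrm{Ber}(\pi_{z,c})\,\|\,\mathrm{Ber}(g(w)))$, Pinsker to pass to $\tfrac{2}{\ln 2}\mathrm{Var}(\pi_{Z,c}\mid W=w)$, the binomial score identity $\partial_w\alpha_z=\alpha_z\tfrac{z-cw}{w(1-w)}$ combined with Cauchy--Schwarz on the covariance to get $(g'(w))^2w(1-w)\le c\,\mathrm{Var}(\pi_{Z,c}\mid W=w)$, the endpoint values $g(0)=\pi_{0,c}=0$, $g(1)=\pi_{c,c}=1$ forced by the marking assumption, and the final Cauchy--Schwarz pairing $\sqrt{f_WI}$ against $1/\sqrt{f_Ww(1-w)}$ --- is exactly the Huang--Moulin technique, and every constant checks out: $2\ln 2\cdot c\cdot\tfrac{2}{c\ln 2}=4$. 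The one thing your write-up buys that the paper's does not is self-containedness; the one thing it should perhaps make explicit is that the paper's $\mathcal{I}_c$ is defined via $I(Y;\mathbf{M}\mid C=c,\mathbf{W})/c$, and that in the binary case $\mathbf{M}=(c-Z,Z)$ is in bijection with $Z$ given $C=c$, which justifies your identification $c^2\mathcal{I}_c(W,\bm{\pi}_c)=c\,I(Y;Z\mid C=c,W)$. Your remark about $\chi^2$/Hellinger failing near $g(w)\in\{0,1\}$ is a fair observation, and the disposal of non-absolutely-continuous $f_W$ is harmless since the paper treats $f_W$ as a density throughout.
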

\begin{proof}
This follows from \textit{Theorem 7} in \cite{HM2012TIFS}.
\end{proof}

\section{Asymptotic Theorem}
\label{sec:theorem}
We wish to obtain the asymptotic behavior of the payoff given in equation \eqref{multipayoff} (for $q=2, \tilde{L}=1$), as well as the solution to the asymptotic maximin game in analogy to the single TV channel case. For the single TV channel case \cite{HM2012TIFS}, this is referred to as the asymptotic saddle point value, which is the limit of the maximin value of the payoff as the number of pirates is sent to infinity. This is also accompanied by the asymptotic optimal strategies for both tracer and pirates, which are the asymptotic solutions to the maximin game. However, as this makes logical sense in terms of taking the limit of the maximin game, this is usually not solved in the same way as it is formulated. Instead of taking the limit of a sequence of optimal payoffs (and optimal strategies), which is not possible given no closed form solution exists for either, the payoff is approximated for large number of pirates and solved in the limit. This can be justified if the optimal payoff converges uniformly in the limit, which is usually glossed over in the literature. Nevertheless, we shall adopt the same approach and write down an expansion of our payoff function. In analogy to the single channel case, we make the following regularity assumptions:
\begin{itemize}
  \item For any $c \in \{1,\dots, k-L+1\}$, there exists a bounded, twice differentiable function $g_c(x)$ for $x \in [0,1]$ with $g(0)=0$ and $g(1)=1$ such that:
\begin{align}
&\pi_{z,c}=g_c\left(\dfrac{z}{c}\right), \quad \forall z \in \{0, \dots, c\},
&\lim_{k \to \infty} g_c(x) \overset{\Delta}{=} g(x),\hspace{2mm} c \in w(1).
\label{klimitg}
\end{align}
\end{itemize}
\begin{lemma}
\label{asymhm}
Let $G[g_c(w)]\overset{\Delta}{=} \arccos\left[1-2g_c(w)\right]$, and $\mathcal{J}[g_c(w)] \overset{\Delta}{=} w(1-w) \left[G'(w)\right]^2$, where the derivative is w.r.t $w$. Let $\bar{\mathcal{J}}[g_c(W)]\overset{\Delta}{=} \mathbb{E}_{f_W}\left[\mathcal{J}\left[g_c(W)\right]\right]$ then we can write:
\begin{equation}
\mathcal{I}_{c}(W,\bm{\pi}_{c})=\dfrac{1}{c^22 \ln 2 }\bar{\mathcal{J}}[g_c(W)]+o\left(\dfrac{1}{c^2}\right).
\end{equation}
\end{lemma}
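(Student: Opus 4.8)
The plan is to reduce Lemma~\ref{asymhm} to the single-channel asymptotic expansion of Huang and Moulin~\cite{HM2012TIFS}, since by the preceding lemma $\mathcal{I}_c(W,\bm{\pi}_c)$ \emph{is} exactly the single-channel payoff for $c$ pirates with bias distribution $f_W$ and attack vector $\bm{\pi}_c$. Thus the statement is essentially a restatement of their asymptotic result, and the work is to set up the Fisher-information / divergence expansion carefully and match notation. First I would write $c\,\mathcal{I}_c(W,\bm\pi_c) = I(Y;Z\mid C=c,W) = \mathbb{E}_{f_W}\big[I(Y;Z\mid C=c,W=w)\big]$, and for fixed $w$ expand $I(Y;Z\mid C=c,W=w)$ for large $c$. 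Here $Z\mid(C=c,W=w)$ is $\mathrm{Binomial}(c,w)$, so $Z/c$ concentrates around $w$ with fluctuations of order $c^{-1/2}$, and $Y\mid Z$ is Bernoulli with parameter $\pi_{z,c}=g_c(z/c)$.

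The key computation is a second-order (local) expansion of the mutual information. Writing $z = cw + \sqrt{c\,w(1-w)}\,t$, a Gaussian/CLT approximation of the Binomial gives $g_c(z/c) \approx g_c(w) + g_c'(w)\sqrt{w(1-w)/c}\,t$, so that $Y$ is approximately Bernoulli with a mean that is a small perturbation of $g_c(w)$. The mutual information $I(Y;Z)$ then equals, to leading order, one half the variance of that perturbation measured in the Fisher metric of the Bernoulli family at parameter $g_c(w)$: concretely $I(Y;Z\mid C=c,W=w) = \tfrac{1}{2\ln 2}\cdot \tfrac{1}{c}\cdot \tfrac{w(1-w)\,[g_c'(w)]^2}{g_c(w)(1-g_c(w))} + o(1/c)$, the $\ln 2$ converting nats to bits. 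One then recognizes the Fisher information of Bernoulli, $[g(1-g)]^{-1}$, together with the chain-rule identity $\frac{[g']^2}{g(1-g)} = \big(\frac{d}{dw}\big[\int^{g}\! \frac{d\xi}{\sqrt{\xi(1-\xi)}}\big]\big)^2 = [G'(w)]^2$ where $G[g_c(w)] = \arccos(1-2g_c(w))$ is precisely the variance-stabilizing (arcsine) transform used in the statement, since $\frac{d}{dg}\arccos(1-2g) = \frac{1}{\sqrt{g(1-g)}}$. Multiplying by $w(1-w)$ gives $\mathcal{J}[g_c(w)] = w(1-w)[G'(w)]^2$. Dividing the resulting $\tfrac{1}{2c\ln2}\,\mathcal{J}[g_c(w)] + o(1/c)$ by $c$ (the definition of $\mathcal{I}_c$) yields $\mathcal{I}_c(W,\bm\pi_c) = \tfrac{1}{c^2 2\ln2}\,\mathcal{J}[g_c(w)] + o(1/c^2)$ pointwise, and taking $\mathbb{E}_{f_W}$ produces $\bar{\mathcal{J}}[g_c(W)]$.

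The main obstacle is making the ``$+o(1/c^2)$'' uniform enough to survive the expectation over $f_W$ and, more delicately, to remain valid as $c \to \infty$ under the regularity assumption \eqref{klimitg} — note $f_W$ (e.g.\ the arcsine distribution) has integrable singularities at $w=0,1$, and $\mathcal{J}[g_c(w)]$ may also blow up near the endpoints, so the interchange of limit/expansion and integration needs the boundedness and twice-differentiability of $g_c$ together with $g_c(0)=0$, $g_c(1)=1$ to control the Fisher factor $[g_c(1-g_c)]^{-1}$ near the edges. I would handle this by invoking the corresponding uniform remainder estimate already established in~\cite{HM2012TIFS} (their Theorem on the asymptotic single-channel payoff), whose hypotheses match our regularity assumptions; the remainder of the argument is the bookkeeping above plus the identity linking $G$ to the arcsine transform. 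A secondary technical point is justifying the Gaussian approximation error in the local expansion of $I(Y;Z)$ — the standard route is an Edgeworth/Berry–Esseen bound on the Binomial plus the fact that the integrand $g_c$ is $C^2$, so the Taylor remainder contributes at order $o(1/c)$ to $I(Y;Z\mid C=c,W=w)$ — but again this is exactly the content imported from~\cite{HM2012TIFS}, so I would simply cite it rather than reprove it.
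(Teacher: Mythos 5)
Your proposal is correct and ultimately takes the same route as the paper: the paper's entire proof is to identify $\mathcal{I}_c(W,\bm{\pi}_c)$ with the single-channel payoff and cite \emph{Theorem 8} of \cite{HM2012TIFS}, which is exactly where you land after your (accurate) sketch of the Fisher-information expansion and the arcsine variance-stabilizing transform. The additional detail you supply is a reconstruction of what lies inside the cited theorem rather than a different argument.
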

\begin{proof}
This follows from \textit{Theorem 8} in \cite{HM2012TIFS}.
\end{proof}
\begin{lemma}
\label{asymhm1}
$\bar{\mathcal{J}}[g(W)] \geq \pi^2\left[\mathlarger{\int}_0^1 \dfrac{dw}{f_W(w)w(1-w)}\right]^{-1}$, with equality iff $g(w)=g_{\mathrm{opt}}(w)$ given by:
\begin{equation}
\label{gopt}
g_{\mathrm{opt}}(w)=\dfrac{1}{2}\left[1-\cos \left(\dfrac{\pi \mathlarger{\int}_0^w \dfrac{dv}{f_W(v)v(1-v)}}{\mathlarger{\int}_0^1 \dfrac{dv}{f_W(v)v(1-v)}}\right)\right].
\end{equation}
\end{lemma}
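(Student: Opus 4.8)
The plan is to read $\bar{\mathcal{J}}[g(W)]$ as a weighted Dirichlet energy of the reparametrised response curve $G[g(w)]=\arccos[1-2g(w)]$, subject to fixed endpoint values, and to minimise it by Cauchy--Schwarz (equivalently, by solving the Euler--Lagrange equation, whose first integral gives $a(w)G'(w)=\text{const}$). First I would unfold the definitions: writing $a(w)\overset{\Delta}{=} f_W(w)\,w(1-w)\ge 0$, we have
\[
\bar{\mathcal{J}}[g(W)]=\mathbb{E}_{f_W}\!\big[\mathcal{J}[g(W)]\big]=\int_0^1 f_W(w)\,w(1-w)\,\big[G'(w)\big]^2\,dw=\int_0^1 a(w)\,\big[G'(w)\big]^2\,dw .
\]
Since $g$ is twice differentiable with $g(0)=0$ and $g(1)=1$, the map $g\mapsto G=\arccos(1-2g)$ gives the boundary data $G(0)=\arccos(1)=0$ and $G(1)=\arccos(-1)=\pi$; moreover $G$ is continuous on $[0,1]$ and absolutely continuous on compact subintervals of $(0,1)$, so $\int_0^1 G'(w)\,dw=G(1)-G(0)=\pi$.

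Next I would dispose of the degenerate cases: if $\int_0^1 dw/a(w)=\infty$ the right-hand side of the claimed inequality is $0$ and there is nothing to prove, and if $\bar{\mathcal{J}}[g(W)]=\infty$ the inequality is trivial; so assume both quantities are finite. Then Cauchy--Schwarz applied to the factorisation $G'=\big(\sqrt{a}\,G'\big)\cdot\big(1/\sqrt{a}\big)$ yields
\[
\pi=\int_0^1 G'(w)\,dw\le\Big(\int_0^1 a(w)\,\big[G'(w)\big]^2\,dw\Big)^{1/2}\Big(\int_0^1\frac{dw}{a(w)}\Big)^{1/2},
\]
and squaring and rearranging gives exactly $\bar{\mathcal{J}}[g(W)]\ge \pi^2\big[\int_0^1 dw/(f_W(w)w(1-w))\big]^{-1}$.

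Finally, for the equality case I would invoke the Cauchy--Schwarz equality condition: equality holds iff $\sqrt{a(w)}\,G'(w)$ is a.e.\ a constant multiple of $1/\sqrt{a(w)}$, i.e.\ $G'(w)=\kappa/a(w)$ for some constant $\kappa$. Integrating and enforcing $G(0)=0$, $G(1)=\pi$ forces $\kappa=\pi\big/\int_0^1 dv/a(v)$, hence $G(w)=\pi\,\int_0^w \tfrac{dv}{a(v)}\big/\int_0^1 \tfrac{dv}{a(v)}$; inverting $G=\arccos(1-2g)$ via $g=\tfrac12(1-\cos G)$ reproduces precisely the formula \eqref{gopt} for $g_{\mathrm{opt}}$, and plugging $g_{\mathrm{opt}}$ back in verifies that the bound is attained. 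The only real subtlety --- and the step I expect to need the most care --- is the regularity/boundary bookkeeping near $w=0,1$, where $G'$ may be unbounded and $\arccos$ has infinite slope; this is controlled by the finiteness of $\int_0^1 dw/a(w)$ together with the assumed twice-differentiability of $g$, and is the same technical point already handled in the single-channel analysis of \cite{HM2012TIFS}.
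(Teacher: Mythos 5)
Your proposal is correct, but note that the paper does not actually prove this statement: its ``proof'' is a one-line citation to \emph{Lemma 7} of \cite{HM2012TIFS}. What you have written is, in substance, a reconstruction of the argument behind that cited lemma: rewrite $\bar{\mathcal{J}}[g(W)]=\int_0^1 a(w)[G'(w)]^2\,dw$ with $a(w)=f_W(w)w(1-w)$, use the boundary data $G(0)=0$, $G(1)=\pi$ coming from $g(0)=0$, $g(1)=1$, apply Cauchy--Schwarz to $G'=(\sqrt{a}\,G')(1/\sqrt{a})$, and read off the equality case $G'(w)=\kappa/a(w)$, which upon integrating and inverting $g=\tfrac12(1-\cos G)$ gives exactly \eqref{gopt}. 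All the steps check out: the degenerate cases are dispatched correctly, and the one point worth making explicit is that equality in your chain also requires $|\int_0^1 G'|=\int_0^1|G'|$, i.e.\ that $G'$ does not change sign --- but this is automatic once $G'=\kappa/a$ with $a>0$. So relative to the paper your proof is strictly more informative (self-contained where the paper defers to a reference), and relative to the underlying source it is essentially the same variational/Cauchy--Schwarz argument rather than a genuinely different route.
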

\begin{proof}
This is \textit{Lemma 7} in \cite{HM2012TIFS}.
\end{proof}
For $q=2, \tilde{L}=1$, we can rewrite equation \eqref{multipayoff} as:
\begin{equation}
\label{payoff}
\begin{aligned}
J_{k,L}[f_W,p_{Y|Z, C},p_{\hat{C}}]&=k^{-1}I(\hat{Y}; \hat{Z}|\hat{C}, \hat{W})+k^{-1}I(\hat{Y}; \hat{C}|\hat{W})
\\
&=k^{-1}\sum_{\hat{c}}p_{\hat{C}}(\hat{c})\sum_{l=1}^L c^l \mathcal{I}_{c^l}(W,\bm{\pi}_{c^l})+k^{-1}I(\hat{Y}; \hat{C}|\hat{W})
\\
&=k^{-1}L\sum_{c}p_{C}(c)c \mathcal{I}_{c}(W,\bm{\pi}_{c})+k^{-1}I(\hat{Y}; \hat{C}|\hat{W}),
\end{aligned}
\end{equation}
where the second equality follows from {Lemma \ref{scpaylemma}}, and the third equality follows from $\{C^l\}_{l=1}^L$ having the same marginal, as dictated by $p_{\hat{C}}$ being symmetric under permutations (TV channel symmetry), i.e. $C$ denotes the number of pirates assigned to any TV channel.
\begin{lemma}
\label{jstar}
Let $\tilde{J}_{k, L}[f_W, p_{Y|Z, C}, p_C]\overset{\Delta}{=} \sum_{c}p_{C}(c)\dfrac{c}{k} \mathcal{I}_{c}(W,\bm{\pi}_{c})$, then an asymptotically optimal\footnote{An asymptotically optimal sequence of strategies for the pirates is one that produces the lowest value for the payoff in the limit.} $p_C$ for $\tilde{J}$ must obey:
\begin{equation}
\lim_{k \to \infty}\dfrac{p_C(c)k^\alpha}{c^\alpha}< \infty,
\end{equation}
for some $\alpha \geq1$, and $\tilde{J}$ optimally decays like $1/k^2$.
\end{lemma}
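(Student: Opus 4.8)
The plan is to obtain both assertions from a two-sided control of $\min\tilde J_{k,L}$ over pirate strategies: a lower bound of order $1/k^2$ that is uniform over \emph{all} strategies, and a matching upper bound realised by the even-spread attack. For the lower bound, Lemma~\ref{ineqlemma} controls each single-channel term uniformly in the within-channel strategy: for every $c$ and every $\bm{\pi}_c$ one has $c\,\mathcal{I}_c(W,\bm{\pi}_c)\geq A/c$ with $A\overset{\Delta}{=}\tfrac{2}{\ln 2}\big[\int_0^1 dw/(f_W(w)w(1-w))\big]^{-1}$, a strictly positive constant whenever this integral is finite (in particular for the tracer's optimal, arcsine, choice). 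Summing against $p_C$,
\begin{equation}
\tilde J_{k,L}=\frac1k\sum_{c}p_C(c)\,c\,\mathcal{I}_c(W,\bm{\pi}_c)\ \geq\ \frac Ak\sum_c\frac{p_C(c)}{c}=\frac Ak\,\E\!\left[\frac1C\right]\ \geq\ \frac{AL}{k^2},
\end{equation}
where the last step uses that TV-channel symmetry forces $\sum_{l=1}^{L}C^l=k$, hence $\E[C]=k/L$, so that Jensen's inequality for $x\mapsto 1/x$ gives $\E[1/C]\geq L/k$. Thus no pirate strategy makes $\tilde J$ decay faster than $1/k^2$.

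\textbf{Achievability.} To see the rate $1/k^2$ is attained, I would evaluate $\tilde J_{k,L}$ on the even-spread attack: taking $L\mid k$ for simplicity, let $p_{\hat{C}}$ be the point mass at $\hat{c}=(k/L,\dots,k/L)$ (TV-channel symmetric, with marginal $p_C$ the point mass at $k/L$), and use any admissible within-channel strategy meeting the regularity assumption, e.g.\ $g_c=g_{\mathrm{opt}}$ of \eqref{gopt}. Then $\tilde J_{k,L}=\tfrac1L\mathcal{I}_{k/L}(W,\bm{\pi}_{k/L})$, and since $k/L\to\infty$, Lemma~\ref{asymhm} together with \eqref{klimitg} gives $\mathcal{I}_{k/L}(W,\bm{\pi}_{k/L})=\tfrac{L^2}{k^2\,2\ln 2}\bar{\mathcal{J}}[g_{k/L}(W)]+o(L^2/k^2)$; as $g_{k/L}\to g$ the prefactor tends to the positive constant $\bar{\mathcal{J}}[g(W)]$ (positive by Lemma~\ref{asymhm1}), so $k^2\tilde J_{k,L}\to\tfrac{L}{2\ln 2}\bar{\mathcal{J}}[g(W)]>0$. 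The case $L\nmid k$ is handled by spreading as evenly as possible, which leaves the leading order unchanged. Combined with the previous paragraph, $\min\tilde J_{k,L}$ is bounded below by a positive multiple of $1/k^2$ and attains this order, i.e.\ $\tilde J$ optimally decays like $1/k^2$ (pinning down the best constant, hence the genuinely optimal spreading, is the business of the theorem of Section~\ref{sec:theorem}, not of this lemma).

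\textbf{Constraint on an optimal $p_C$, and the main obstacle.} Finally, let $p_C$ (with its accompanying $\bm{\pi}$) be asymptotically optimal; by achievability $\tilde J_{k,L}\leq C_1/k^2$ for some constant $C_1$ and all large $k$. Feeding this into the lower bound --- valid for the accompanying $\bm{\pi}$ whatever it may be --- gives $\tfrac Ak\E[1/C]\leq C_1/k^2$, i.e.\ $\E[1/C]\leq C_2/k$ with $C_2=C_1/A$. A Markov bound then yields, for every $c\in\{1,\dots,k-L+1\}$,
\begin{equation}
p_C(c)\ \leq\ \Prob(C\leq c)\ =\ \Prob\!\left(\tfrac1C\geq\tfrac1c\right)\ \leq\ c\,\E\!\left[\tfrac1C\right]\ \leq\ \frac{C_2\,c}{k},
\end{equation}
that is, $p_C(c)\,k/c\leq C_2$ uniformly in $c$ and $k$ --- precisely the asserted bound with $\alpha=1$ (a larger exponent may hold for more sharply concentrated optimal sequences, such as the point mass above, which is why ``some $\alpha\geq1$'' is stated). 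The one genuinely delicate point is that the lower bound must be uniform over all pirate strategies, so that an asymptotically optimal $p_C$ cannot dodge it by a $c$-dependent choice of $\bm{\pi}_c$; Lemma~\ref{ineqlemma} is exactly what supplies this uniformity, and the rest is the elementary estimate above, modulo the bookkeeping of $L\nmid k$ in the achievability construction and of controlling the $o(1/c^2)$ remainder of Lemma~\ref{asymhm} at the one or two values of $c$ carrying mass under the even-spread attack.
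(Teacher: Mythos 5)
Your proposal is correct, and it reaches the conclusion by a genuinely different route than the paper. The paper's proof splits $\sum_c p_C(c)\frac{c}{k}\mathcal{I}_c(W,\bm{\pi}_c)$ into three regimes ($c\sim k$, $c\in w(1)\cap o(k)$, $c\in O(1)$), substitutes the expansion of Lemma~\ref{asymhm} into the first two, and argues qualitatively that the terms with $c\nsim k$ decay only like $1/(kc)$ resp.\ $1/k$, so that any non-vanishing mass there makes $\tilde J$ decay more slowly than $1/k^2$; the quantitative condition that $p_C(c)k^\alpha/c^\alpha$ stay bounded is then read off as the required rate of vanishing. You instead combine the non-asymptotic, strategy-uniform bound of Lemma~\ref{ineqlemma} with Jensen's inequality to get the floor $\tilde J\geq \frac{A}{k}\mathbb{E}[1/C]\geq AL/k^2$, verify achievability of the $1/k^2$ order with the even-spread point mass, and then extract the pointwise constraint $p_C(c)\leq C_2\,c/k$ from $\mathbb{E}[1/C]\leq C_2/k$ via Markov's inequality. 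This buys a cleaner and more rigorous statement: an explicit constant, the concrete exponent $\alpha=1$ (which suffices for ``some $\alpha\geq1$''), and a correct identification that the uniformity in $\bm{\pi}_c$ supplied by Lemma~\ref{ineqlemma} is what prevents the pirates from evading the bound by tailoring the within-channel attack to $c$; the paper's version, by contrast, dovetails more directly with the three-regime decomposition reused in Lemmas~\ref{alpha1lemma} and~\ref{alphag1lemma}. The one caveat you share with the paper is that the argument is only non-trivial when $\int_0^1 dw/\bigl(f_W(w)w(1-w)\bigr)<\infty$ (otherwise $A=0$ and neither your floor nor the paper's claim that the $c\sim k$ term ``always decays like $1/k^2$'' is substantiated); you at least flag this, and it is harmless in context since the subsequent maximin analysis only needs the lemma for such $f_W$.
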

\begin{proof}
We know that $c$ can at most grow as a fraction of $k$, that is, the value $q \overset{\Delta}{=} c/k$ can only belong to $[0,1]$ in the limit. That is, we can divide the summation in $\tilde{J}$ as follows:
\begin{equation}
\label{ft2}
\begin{aligned}
\tilde{J}=&\sum_{c \sim k}p_{C}(c)\dfrac{c}{k} \left(\dfrac{1}{c^22 \ln 2 }\bar{\mathcal{J}}[g_c(W)]+o\left(\dfrac{1}{c^2}\right)\right)\\
&+\sum_{c \in w(1)\cap o(k)}p_{C}(c)\dfrac{c}{k} \left(\dfrac{1}{c^22 \ln 2 }\bar{\mathcal{J}}[g_c(W)]+o\left(\dfrac{1}{c^2}\right)\right)\\
&+\sum_{c \in O(1)}p_{C}(c)\dfrac{c}{k} \mathcal{I}_{c}(W,\bm{\pi}_{c}),
\end{aligned}
\end{equation}
where note that $\bar{\mathcal{J}}$ is bounded in the limit. For non-decaying measure $p_C(c)$, we can see from \eqref{ft2} that the first term decays like $1/k^2$, the second term decays like $1/kc$ where $c$ is sub-linear in $k$, and the last term decays like $1/k$. Therefore if $p_C(c)$ in the last two terms is not small enough in the limit, the payoff will decay slower than $1/k^2$. Any strategy that assigns asymptotically small enough measure in the last two terms will hence produce a lower value for the payoff given large enough $k$, nevertheless, the first term will always decay like $1/k^2$.
\end{proof}
\begin{lemma}
$\mathbb{E}_{p_C}\left[\dfrac{C}{k}\right]=\dfrac{1}{L}$.
\end{lemma}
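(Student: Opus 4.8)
The plan is to exploit the hard partition constraint coming from $\tilde L = 1$ together with TV channel symmetry. In the regime $\tilde L = 1$, equations \eqref{s}--\eqref{nomore} force $\{s^l\}_{l=1}^L$ to be a genuine partition of $\mathcal{K}$ into $L$ non-empty blocks, so for every realization $\hat s$ of $\hat S$ the block sizes satisfy $\sum_{l=1}^L |s^l| = |\mathcal{K}| = K = k$. Equivalently, the random variables $C^l = |S^l|$ obey the pathwise conservation law $\sum_{l=1}^L C^l = k$ almost surely.

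First I would take expectations of both sides of $\sum_{l=1}^L C^l = k$ and use linearity of expectation to get $\sum_{l=1}^L \mathbb{E}[C^l] = k$. Next I would invoke TV channel symmetry, equation \eqref{csymm} (equivalently Lemma \ref{mlemma} for the tallies), which says each $C^l$ has the same marginal law $p_C$, so $\mathbb{E}[C^l] = \mathbb{E}_{p_C}[C]$ for every $l \in \mathcal{L}$. Substituting gives $L\,\mathbb{E}_{p_C}[C] = k$, and dividing through by $kL$ yields $\mathbb{E}_{p_C}[C/k] = 1/L$, which is the claim.

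There is essentially no obstacle here: the identity is a direct consequence of the conservation law $\sum_{l} C^l = k$ — itself just a restatement of the $\tilde L = 1$ constraint — combined with the identical-distribution property already recorded in \eqref{csymm}. The only point worth flagging is that $\sum_l C^l = k$ holds pathwise rather than merely in expectation; this is what lets the symmetry argument pin the common mean down exactly to $k/L$, rather than only constraining the sum of the means.
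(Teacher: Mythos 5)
Your proof is correct and is essentially the paper's own argument read in the opposite direction: the paper starts from $\sum_c p_C(c)\,c/k$, averages over the $L$ identical marginals, and recombines into the joint expectation of $(C^1+\dots+C^L)/k=1$, which is exactly your combination of the pathwise conservation law $\sum_l C^l=k$ with linearity and TV channel symmetry. No differences worth noting.
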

\begin{proof}
\begin{equation}
\begin{aligned}
&\sum_{c}p_C(c)\dfrac{c}{k}=\dfrac{1}{L}\sum_{l=1}^{L}\sum_{c}p_C(c)\dfrac{c}{k}=\dfrac{1}{L}\sum_{l=1}^{L}\sum_{c^l}p_C(c^l)\dfrac{c^l}{k}\\
&=\dfrac{1}{L}\sum_{c^1,\dots,c^L}p_{C^1,\dots,C^L}(c^1,\dots,c^L)\dfrac{c^1+\dots+c^L}{k}=\dfrac{1}{L}.
\end{aligned}
\end{equation}
\end{proof}
\begin{lemma}
\label{convjen}
$\sum_c p_C(c)\dfrac{k}{c} \geq L$, with equality iff $p_C(k/L)=1$.
\end{lemma}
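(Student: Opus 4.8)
The plan is to recognize this as a direct application of Jensen's inequality to the strictly convex function $\phi(x) = 1/x$ on $(0,\infty)$, using the preceding lemma which fixes the mean of $C/k$. First I would note that $C = |S^l|$ is a well-defined positive integer: by the constraint \eqref{s} each $s^l \neq \emptyset$, so $C \geq 1$ and the random variable $k/C$ is bounded and well-defined under $p_C$. Thus $\mathbb{E}_{p_C}[k/C]$ exists.

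Next I would invoke the previous lemma, which gives $\mathbb{E}_{p_C}[C/k] = 1/L$. Writing $Q \overset{\Delta}{=} C/k$, which takes values in $(0,1]$, we have $\mathbb{E}_{p_C}[Q] = 1/L$ and we want $\mathbb{E}_{p_C}[1/Q] \geq L$. Since $\phi(x) = 1/x$ is strictly convex on $(0,\infty)$, Jensen's inequality yields
\begin{equation}
\sum_c p_C(c)\frac{k}{c} = \mathbb{E}_{p_C}\!\left[\frac{1}{Q}\right] \geq \frac{1}{\mathbb{E}_{p_C}[Q]} = \frac{1}{1/L} = L,
\end{equation}
which is the claimed bound.

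For the equality case, I would use the fact that for a strictly convex $\phi$, equality in Jensen's inequality holds if and only if the argument is almost surely constant, i.e.\ $Q = C/k$ is deterministic. Combined with $\mathbb{E}_{p_C}[Q] = 1/L$, this forces $C/k = 1/L$ with probability $1$, equivalently $p_C(k/L) = 1$ (which of course presupposes that $L$ divides $k$, so that $k/L$ is an admissible value for $C$). I do not expect any real obstacle here; the only point worth stating carefully is the positivity of $C$ guaranteeing that $1/C$ is finite, and the strictness of convexity of $1/x$ that underlies the equality characterization.
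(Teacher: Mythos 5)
Your proposal is correct and follows essentially the same route as the paper: an application of Jensen's inequality to the strictly convex function $x \mapsto 1/x$, combined with the preceding lemma giving $\mathbb{E}_{p_C}[C/k]=1/L$. You are slightly more explicit than the paper about the equality case (strict convexity forcing $C/k$ to be deterministic, hence $p_C(k/L)=1$) and about the divisibility caveat, but the argument is the same.
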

\begin{proof}
\begin{equation}
  \mathbb{E}_{p_C}\left[\dfrac{k}{C}\right]=k\mathbb{E}_{p_C}\left[\dfrac{1}{C}\right]\geq k\dfrac{1}{\mathbb{E}_{p_C}\left[C\right]}=\dfrac{1}{\mathbb{E}_{p_C}\left[\dfrac{C}{k}\right]}=L,
\end{equation}
which is an application of Jensen's inequality to a strictly convex function.
\end{proof}
\begin{lemma}
\label{lemmaL}
Let $\alpha \geq 1$, then $\sum_{c\sim k}p_C(c)\dfrac{k}{c}\geq L+\cdots$, where terms in ellipsis go to zero in the limit $k \to \infty$.
\end{lemma}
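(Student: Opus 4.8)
The plan is to lower-bound the restricted sum directly via Cauchy--Schwarz and then show that restricting to $c\sim k$ costs nothing asymptotically. Applying the Cauchy--Schwarz inequality to the pairing $\sqrt{p_C(c)\,k/c}\cdot\sqrt{p_C(c)\,c/k}$ over the indices with $c\sim k$ gives
\begin{equation}
\left(\sum_{c\sim k}p_C(c)\right)^2\le\left(\sum_{c\sim k}p_C(c)\frac{k}{c}\right)\left(\sum_{c\sim k}p_C(c)\frac{c}{k}\right)\le\frac{1}{L}\sum_{c\sim k}p_C(c)\frac{k}{c},
\end{equation}
where the final step uses $\sum_{c\sim k}p_C(c)\,c/k\le\mathbb{E}_{p_C}[C/k]=1/L$ (dropping nonnegative terms from the lemma preceding Lemma~\ref{convjen}). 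Writing $P_A\overset{\Delta}{=}\sum_{c\sim k}p_C(c)$, this reads $\sum_{c\sim k}p_C(c)\,k/c\ge L\,P_A^2$, so it suffices to show $P_A\to1$, i.e.\ that an asymptotically optimal $p_C$ (one meeting the Lemma~\ref{jstar} condition with some $\alpha\ge1$) puts vanishing mass on the sublinear regime $c=o(k)$.

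\textbf{Mass concentrates on $c\sim k$.} I would show $\sum_{c=o(k)}p_C(c)\to0$ regime by regime, along the decomposition \eqref{ft2} used in the proof of Lemma~\ref{jstar}, now exploiting the lower bound of Lemma~\ref{ineqlemma}, namely $c^2\mathcal{I}_c(W,\bm\pi_c)\ge\frac{2}{\ln2}\big[\int_0^1\frac{dw}{f_W(w)w(1-w)}\big]^{-1}>0$. For any fixed $c_0$, if $p_C(c_0)$ did not tend to $0$ then $\tilde J\ge p_C(c_0)\frac{c_0}{k}\mathcal{I}_{c_0}(W,\bm\pi_{c_0})$ would be of order $1/k$ along a subsequence, contradicting the fact (Lemma~\ref{jstar}) that the optimal $\tilde J$ decays like $1/k^2$; hence the $c\in O(1)$ part of the mass vanishes. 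Likewise, if a fixed amount of mass $\delta>0$ sat on a band $a(k)\le c\le b(k)$ with $a(k)\to\infty$ and $b(k)=o(k)$, then Lemma~\ref{ineqlemma} gives $\tilde J\gtrsim\frac{1}{k}\sum_{a(k)\le c\le b(k)}\frac{p_C(c)}{c}\ge\frac{\delta}{k\,b(k)}$, which is $w(1/k^2)$ because $b(k)=o(k)$ — again contradicting optimality. So the $c\in w(1)\cap o(k)$ part of the mass vanishes too, and since there is no $w(k)$ regime ($c\le k-L+1$) we get $P_A=1-\sum_{c=o(k)}p_C(c)\to1$. Plugging $P_A\to1$ into $\sum_{c\sim k}p_C(c)\,k/c\ge L\,P_A^2=L(1-o(1))^2$ yields $\sum_{c\sim k}p_C(c)\,k/c\ge L+\cdots$ with $\cdots\to0$, as claimed.

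\textbf{Main obstacle.} The genuinely delicate point is the mass-concentration claim $P_A\to1$ and its bookkeeping: the classes ``$c\in O(1)$'' and ``$c\in w(1)\cap o(k)$'' each contain unboundedly many values of $c$, so their vanishing mass must be phrased with explicit cutoffs (a fixed $C_0$ and a slowly growing $b(k)=o(k)$, as in \eqref{ft2}) plus a short diagonal argument, rather than as one clean limit. One also has to make explicit that Lemma~\ref{lemmaL} is a statement about asymptotically optimal $p_C$ — it is precisely this that licenses invoking Lemma~\ref{jstar} to declare that any strategy leaking non-negligible mass below the $c\sim k$ scale is strictly suboptimal for large $k$. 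The remaining ingredients (Cauchy--Schwarz, $\mathbb{E}_{p_C}[C/k]=1/L$, and Lemma~\ref{ineqlemma}) are routine.
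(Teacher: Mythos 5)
Your proposal is correct in substance and follows the same skeleton as the paper's proof, but it justifies the crucial limit by a different mechanism. Your Cauchy--Schwarz step is mathematically the same inequality the paper uses: applying Jensen's inequality ($\E[1/X]\geq 1/\E[X]$) to the law of $C$ conditioned on $\{c\sim k\}$ is exactly $\bigl(\sum_{c\sim k}p_C(c)\bigr)^2\leq\bigl(\sum_{c\sim k}p_C(c)\tfrac{k}{c}\bigr)\bigl(\sum_{c\sim k}p_C(c)\tfrac{c}{k}\bigr)$, and both routes then invoke $\E_{p_C}[C/k]=1/L$; the paper merely retains the (helpful) denominator $1-L\sum_{c\nsim k}p_C(c)\,c/k$ where you discard it. The genuine divergence is in showing $P_A=\sum_{c\sim k}p_C(c)\to 1$. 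The paper does this purely from the hypothesis of Lemma~\ref{jstar}: it writes $p_C(c)=\bar p_C(c)(c/k)^{\alpha}$ with $\bar p_C$ bounded and argues the off-scale sums vanish because $(c/k)^{\alpha}\to 0$ for $c\nsim k$. You instead re-derive concentration from optimality, using the information-theoretic lower bound of Lemma~\ref{ineqlemma} to show that any non-vanishing mass at sublinear $c$ forces $\tilde J=w(1/k^2)$. Your route is arguably more robust — it yields the quantitative statement $\sum_{c\leq\epsilon k}p_C(c)=O(\epsilon)$ and avoids the paper's silent summation of unboundedly many individually vanishing terms — but it buys this at a price you only partially acknowledge: it requires $\int_0^1 dw/(f_W(w)w(1-w))<\infty$ (otherwise the Lemma~\ref{ineqlemma} bound is vacuous), and, more importantly, it proves the lemma only for $p_C$ that are asymptotically optimal in the sense $\tilde J=O(1/k^2)$. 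The lemma as stated is about \emph{any} $p_C$ satisfying the $\alpha\geq 1$ condition, and it is applied in the proof of Lemma~\ref{alpha1lemma} to $\alpha=1$ strategies precisely in order to show they are \emph{not} optimal; your version covers that application only via the extra remark that for strategies with $\tilde J=w(1/k^2)$ the $O(1/k^2)$ lower bounds of Lemmas~\ref{alpha1lemma} and~\ref{alphag1lemma} hold trivially. You should state that reduction explicitly rather than asserting that the lemma ``is a statement about asymptotically optimal $p_C$'' — as written in the paper, it is not.
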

\begin{proof}
\begin{equation}
\begin{aligned}
\left[\sum_{c'\sim k}p_C(c')\right]^{-1}\sum_{c\sim k}p_C(c)\dfrac{c}{k}=\dfrac{1}{L}\left[\sum_{c'\sim k}p_C(c')\right]^{-1}-\left[\sum_{c'\sim k}p_C(c')\right]^{-1}\sum_{c \nsim k}p_{C}(c)\dfrac{c}{k},
\end{aligned}
\end{equation}
and so applying Jensen's inequality, we get:
\begin{equation}
\begin{aligned}
\left[\sum_{c'\sim k}p_C(c')\right]^{-1}\sum_{c\sim k}p_C(c)\dfrac{k}{c}\geq\left[\dfrac{1}{L}\left[\sum_{c'\sim k}p_C(c')\right]^{-1}-\left[\sum_{c'\sim k}p_C(c')\right]^{-1}\sum_{c \nsim k}p_{C}(c)\dfrac{c}{k}\right]^{-1},
\end{aligned}
\end{equation}
and if we multiply both sides by $\sum_{c'\sim k}p_C(c')$, we arrive at:
\begin{equation}
\sum_{c\sim k}p_C(c)\dfrac{k}{c}\geq\left[\dfrac{1}{L}\left[\sum_{c'\sim k}p_C(c')\right]^{-2}-\left[\sum_{c'\sim k}p_C(c')\right]^{-2}\sum_{c \nsim k}p_{C}(c)\dfrac{c}{k}\right]^{-1},
\end{equation}
or simply:
\begin{equation}
\begin{aligned}
&\sum_{c\sim k}p_C(c)\dfrac{k}{c}\geq L\dfrac{\left[\sum_{c'\sim k}p_C(c')\right]^{2}}{1-L\sum_{c \nsim k}p_{C}(c)\dfrac{c}{k}}=L\dfrac{\left[1-\sum_{c'\nsim k}p_C(c')\right]^{2}}{1-L\sum_{c \nsim k}p_{C}(c)\dfrac{c}{k}}
\\
&=L\dfrac{\left[1-\sum_{c'\nsim k}\bar{p}_C(c')\left(\dfrac{c}{k}\right)^{\alpha}\right]^{2}}{1-L\sum_{c \nsim k}\bar{p}_{C}(c)\left(\dfrac{c}{k}\right)^{\alpha+1}},
\end{aligned}
\end{equation}
where we write $\bar{p}_C(c) \overset{\Delta}{=} p_C(c)\dfrac{k^\alpha}{c^\alpha}$, which is bounded in the limit. For $\alpha \geq 1$ and $c \nsim k$, we have $c^\alpha/k^\alpha \rightarrow 0$.
\end{proof}
\begin{lemma}
\label{alpha1lemma}
For $\alpha=1$, $\tilde{p}_C(c)\overset{\Delta}{=} \dfrac{p_{C}(c) k}{c}$, and up to $o(1/k^2)$, we have the following lower bound for $\tilde{J}$:
\begin{equation}
\begin{aligned}
\hspace{-0.45cm}\tilde{J}_{\alpha=1}
\geq \dfrac{\pi^2 L}{k^22 \ln 2}\left[\int_0^1 \dfrac{dw}{f_W(w)w(1-w)}\right]^{-1}\left(1+L^{-1}\sum_{c \in w(1)\cap o(k)}\tilde{p}_{C}(c)+\dfrac{4L^{-1}}{\pi^2}\sum_{c \in O(1)}\tilde{p}_C(c)\right)
.
\end{aligned}
\end{equation}
\end{lemma}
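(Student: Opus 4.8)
The plan is to split the sum defining $\tilde{J}$ exactly as in the proof of Lemma~\ref{jstar} (equation~\eqref{ft2}) into the three regimes $c\sim k$, $c\in w(1)\cap o(k)$ and $c\in O(1)$, lower-bound each contribution separately, and recombine. Write $F\overset{\Delta}{=}\left[\int_0^1\frac{dw}{f_W(w)w(1-w)}\right]^{-1}$ for brevity. Since $\tilde{p}_C(c)=p_C(c)k/c$ we have $p_C(c)\tfrac{c}{k}=\tfrac{c^2}{k^2}\tilde{p}_C(c)$ and $\tfrac{p_C(c)}{c}=\tfrac{\tilde{p}_C(c)}{k}$; moreover for $\alpha=1$ the weight $\tilde{p}_C$ is precisely the quantity $\bar{p}_C$ of Lemmas~\ref{jstar} and~\ref{lemmaL}, which is bounded in the limit, so $\sum_c\tilde{p}_C(c)$ is bounded. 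These identities are what let every regime contribute with the common prefactor $\tfrac{\pi^2LF}{k^2 2\ln2}$.

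For $c\sim k$ we have $c\to\infty$, so Lemma~\ref{asymhm} gives $\mathcal{I}_c(W,\bm{\pi}_c)=\tfrac{1}{c^2 2\ln2}\bar{\mathcal{J}}[g_c(W)]+o(1/c^2)$, while $g_c\to g$ forces $\bar{\mathcal{J}}[g_c(W)]\to\bar{\mathcal{J}}[g(W)]\ge\pi^2F$ by continuity and Lemma~\ref{asymhm1}, i.e.\ $\bar{\mathcal{J}}[g_c(W)]\ge\pi^2F+o(1)$. Multiplying by $p_C(c)\tfrac{c}{k}$ and summing, the remainder gives $\tfrac1k\sum_{c\sim k}\tfrac{p_C(c)}{c}o(1)=\tfrac1{k^2}\sum_{c\sim k}\tilde{p}_C(c)o(1)=o(1/k^2)$, while the main part is at least $\tfrac{\pi^2F+o(1)}{k\,2\ln2}\sum_{c\sim k}\tfrac{p_C(c)}{c}=\tfrac{\pi^2F+o(1)}{k^2\,2\ln2}\sum_{c\sim k}p_C(c)\tfrac{k}{c}\ge\tfrac{\pi^2F+o(1)}{k^2\,2\ln2}\bigl(L+o(1)\bigr)$, the last step being Lemma~\ref{lemmaL} with $\alpha=1$. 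Expanding the product yields $\tfrac{\pi^2LF}{k^2 2\ln2}+o(1/k^2)$, i.e.\ the ``$1$'' inside the parenthesis.

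For $c\in w(1)\cap o(k)$ the same two lemmas still apply, so $\mathcal{I}_c(W,\bm{\pi}_c)\ge\tfrac{\pi^2F}{c^2 2\ln2}+o(1/c^2)$; multiplying by $p_C(c)\tfrac{c}{k}$, summing, and using $\tfrac{p_C(c)}{c}=\tfrac{\tilde{p}_C(c)}{k}$ with $\sum_c\tilde{p}_C(c)$ bounded, this regime contributes at least $\tfrac{\pi^2F}{k^2 2\ln2}\sum_{c\in w(1)\cap o(k)}\tilde{p}_C(c)+o(1/k^2)$, i.e.\ the $L^{-1}\sum_{c\in w(1)\cap o(k)}\tilde{p}_C(c)$ term. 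For $c\in O(1)$ the asymptotic expansion is unavailable, so instead Lemma~\ref{ineqlemma} is applied directly: $\mathcal{I}_c(W,\bm{\pi}_c)\ge\tfrac{4F}{2\ln2\,c^2}$ exactly, whence $\sum_{c\in O(1)}p_C(c)\tfrac{c}{k}\mathcal{I}_c(W,\bm{\pi}_c)\ge\tfrac{4F}{2\ln2\,k^2}\sum_{c\in O(1)}\tilde{p}_C(c)$, i.e.\ the $\tfrac{4L^{-1}}{\pi^2}\sum_{c\in O(1)}\tilde{p}_C(c)$ term. Adding the three lower bounds and absorbing the accumulated $o(1/k^2)$ gives the claim.

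\textbf{Main obstacle.} The recombination is immediate once each regime is bounded; the real work is the error-term bookkeeping, and in particular showing the various $o(\cdot)$ remainders aggregate into a true $o(1/k^2)$. This needs (i) that the $o(1/c^2)$ remainders of Lemma~\ref{asymhm}, once weighted by $p_C(c)\tfrac{c}{k}$, sum to $o(1/k^2)$, which relies on $\sum_c\tilde{p}_C(c)$ being bounded (true for $\alpha=1$) together with the mild uniformity that these $c\to\infty$ decays are not defeated by mass of $\tilde{p}_C$ concentrating at slowly growing $c$ --- precisely the uniform-convergence point the text already flags as customarily glossed over; and (ii) continuity of the functional $\bar{\mathcal{J}}$ along $g_c\to g$, so that Lemma~\ref{asymhm1} may legitimately be invoked on the finite-$c$ quantities appearing in the first two regimes. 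Neither is deep, but both must be stated carefully to make the ``up to $o(1/k^2)$'' in the statement rigorous.
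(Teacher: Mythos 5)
Your proposal is correct and follows essentially the same route as the paper: the same three-regime split, Lemma~\ref{asymhm} plus Lemma~\ref{asymhm1} for the two regimes with $c\to\infty$, Lemma~\ref{ineqlemma} for $c\in O(1)$, and Lemma~\ref{lemmaL} to extract the factor $L$ from the $c\sim k$ mass; the only cosmetic difference is that the paper first merges the two $c\in w(1)$ sums before bounding and only splits off the $c\sim k$ part at the very end. Your remarks on the uniformity of the $o(1/c^2)$ remainders and the continuity of $\bar{\mathcal{J}}$ under $g_c\to g$ correctly identify the points the paper itself glosses over.
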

\begin{proof}
For $\alpha =1$, all terms in $\tilde{J}$ become of order $1/k^2$, and we can write:
\begin{equation}
\label{ft3}
\begin{aligned}
\tilde{J}_{\alpha=1}=&\sum_{c \sim k}\tilde{p}_{C}(c)\left(\dfrac{1}{k^22 \ln 2 }\bar{\mathcal{J}}[g_c(W)]+o\left(\dfrac{1}{k^2}\right)\right)\\
+&\sum_{c \in w(1)\cap o(k)}\tilde{p}_{C}(c) \left(\dfrac{1}{k^22 \ln 2 }\bar{\mathcal{J}}[g_c(W)]+o\left(\dfrac{1}{k^2}\right)\right)
+\sum_{c \in O(1)}\tilde{p}_{C}(c)\dfrac{c^2}{k^2} \mathcal{I}_{c}(W,\bm{\pi}_{c})\\
&=\dfrac{1}{k^22 \ln 2} \Bigg(\sum_{c \sim k}\tilde{p}_{C}(c)\bar{\mathcal{J}}[g_c(W)]+\sum_{c \in w(1)\cap o(k)}\tilde{p}_{C}(c) \bar{\mathcal{J}}[g_c(W)]\\
&+\sum_{c \in O(1)}\tilde{p}_{C}(c)2 \ln 2\, c^2 \mathcal{I}_{c}(W,\bm{\pi}_{c})\Bigg)+o\left(\dfrac{1}{k^2}\right),
\end{aligned}
\end{equation}
where note that $\tilde{p}_C(c)=\dfrac{p_C(c)k}{c}$ is bounded in the limit. We can replace $\bar{\mathcal{J}}[g_c(W)]$ in the first two terms by $\bar{\mathcal{J}}[g(W)]$, and the remainder will still decay faster than $1/k^2$. That is, up to $o(1/k^2)$, we have the following:
\begin{equation}
\label{ineq}
\begin{aligned}
&\dfrac{1}{k^22 \ln 2} \left(\sum_{c \sim k}\tilde{p}_{C}(c)\bar{\mathcal{J}}[g(W)]+\sum_{c \in w(1)\cap o(k)}\tilde{p}_{C}(c) \bar{\mathcal{J}}[g(W)]+\sum_{c \in O(1)}\tilde{p}_{C}(c)2 \ln 2\, c^2 \mathcal{I}_{c}(W,\bm{\pi}_{c})\right)\\
&=\dfrac{1}{k^22 \ln 2} \left(\bar{\mathcal{J}}[g(W)]\sum_{c \in w(1)}\tilde{p}_{C}(c)+\sum_{c \in O(1)}\tilde{p}_{C}(c)2 \ln 2\, c^2 \mathcal{I}_{c}(W,\bm{\pi}_{c})\right)\geq\\
&\dfrac{1}{k^22 \ln 2} \left(\pi^2 \left[\int_0^1 \dfrac{dw}{f_W(w)w(1-w)}\right]^{-1}\sum_{c \in w(1)}\tilde{p}_{C}(c)+4\left[\int_0^1 \dfrac{dw}{f_W(w)w(1-w)}\right]^{-1}\sum_{c \in O(1)}\tilde{p}_{C}(c) \right)\\
&=\dfrac{1}{k^22 \ln 2}\left[\int_0^1 \dfrac{dw}{f_W(w)w(1-w)}\right]^{-1}\pi^2 \left(\sum_{c \in w(1)}\tilde{p}_{C}(c)+\dfrac{4}{\pi^2}\sum_{c \in O(1)}\tilde{p}_{C}(c) \right),
\end{aligned}
\end{equation}
where the inequality follows from {Lemma \ref{ineqlemma}} and {Lemma \ref{asymhm1}}. Finally, applying {Lemma \ref{lemmaL}}, we obtain the desired bound.
\end{proof}
\begin{lemma}
\label{alphag1lemma}
For $\alpha > 1$, we have the following lower bound for $\tilde{J}$:
\begin{equation}
\tilde{J}_{\alpha >1}\geq \dfrac{L}{k^22 \ln 2}\left[\int_0^1 \dfrac{dw}{f_W(w)w(1-w)}\right]^{-1}\pi^2+o\left(\dfrac{1}{k^2}\right),
\end{equation}
with conditions for equality being:
\begin{align}
&g(w)=g_{\mathrm{opt}}(w),\\
&\lim _{k \to \infty}p_C(k/L)=1,
\end{align}
where $g_{\mathrm{opt}}(w)$ is given by equation \eqref{gopt}.
\end{lemma}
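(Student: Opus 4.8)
The plan is to lower-bound $\tilde{J}_{\alpha>1}=k^{-1}\sum_{c}p_{C}(c)\,c\,\mathcal{I}_{c}(W,\bm{\pi}_{c})$ by retaining only the tail $c>c_{0}$ of the summation, applying the single-channel asymptotics of Lemma~\ref{asymhm} together with the variational estimate of Lemma~\ref{asymhm1} to each term, and then summing against $p_{C}$ via the Jensen bound of Lemma~\ref{convjen}. Write $\bar{p}_{C}(c)\overset{\Delta}{=}p_{C}(c)k^{\alpha}/c^{\alpha}$, which is bounded in the limit by the hypothesis $\alpha\geq1$ (here $\alpha>1$). The single feature distinguishing the regime $\alpha>1$ from $\alpha=1$ is that, for any fixed $c_{0}$,
\begin{equation}
\sum_{c\leq c_{0}}p_{C}(c)\frac{k}{c}=\sum_{c\leq c_{0}}\bar{p}_{C}(c)\left(\frac{c}{k}\right)^{\alpha-1}\longrightarrow 0\qquad(k\to\infty),
\end{equation}
since $\alpha-1>0$, $\bar{p}_{C}$ is bounded, and the sum is finite; this is precisely the correction term that fails to vanish in Lemma~\ref{alpha1lemma}.

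Concretely: fix $\epsilon>0$ and, using Lemma~\ref{asymhm} and the regularity assumption on $\{g_{c}\}$, pick $c_{0}=c_{0}(\epsilon)$ so that $c^{2}\mathcal{I}_{c}(W,\bm{\pi}_{c})\geq(2\ln2)^{-1}\bar{\mathcal{J}}[g_{c}(W)]-\epsilon$ for all $c>c_{0}$. Since the inequality of Lemma~\ref{asymhm1} is a Cauchy--Schwarz estimate that uses only the boundary values $g_{c}(0)=0$, $g_{c}(1)=1$, it applies to every admissible $g_{c}$ (not only the limit $g$), giving $c^{2}\mathcal{I}_{c}(W,\bm{\pi}_{c})\geq\pi^{2}(2\ln2)^{-1}\left[\int_{0}^{1}\frac{dw}{f_{W}(w)w(1-w)}\right]^{-1}-\epsilon$ for $c>c_{0}$. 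Discarding the non-negative terms with $c\leq c_{0}$ and factoring out,
\begin{equation}
\tilde{J}_{\alpha>1}\ \geq\ \frac{1}{k}\sum_{c>c_{0}}p_{C}(c)\,c\,\mathcal{I}_{c}(W,\bm{\pi}_{c})\ \geq\ \left(\frac{\pi^{2}}{2\ln2}\left[\int_{0}^{1}\frac{dw}{f_{W}(w)w(1-w)}\right]^{-1}-\epsilon\right)\frac{1}{k^{2}}\sum_{c>c_{0}}p_{C}(c)\frac{k}{c},
\end{equation}
valid once $\epsilon$ is small enough that the prefactor is positive. By Lemma~\ref{convjen} and the display above, $\sum_{c>c_{0}}p_{C}(c)k/c=\sum_{c}p_{C}(c)k/c-\sum_{c\leq c_{0}}p_{C}(c)k/c\geq L-o(1)$. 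Multiplying by $k^{2}$, letting $k\to\infty$ and then $\epsilon\to0$, we get $\liminf_{k\to\infty}k^{2}\tilde{J}_{\alpha>1}\geq L\pi^{2}(2\ln2)^{-1}\left[\int_{0}^{1}\frac{dw}{f_{W}(w)w(1-w)}\right]^{-1}$, which is the asserted bound modulo $o(1/k^{2})$.

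For the equality conditions, both inequalities used must be asymptotically tight. Tightness of the Jensen step in Lemma~\ref{convjen} forces, by strict convexity of $x\mapsto1/x$, the variable $C/k$ to concentrate at its mean $1/L$, i.e.\ $\lim_{k\to\infty}p_{C}(k/L)=1$; tightness of the Cauchy--Schwarz step in Lemma~\ref{asymhm1} forces $g=g_{\mathrm{opt}}$ with $g_{\mathrm{opt}}$ as in \eqref{gopt}. Conversely, if all the mass sits on $c=k/L\in w(1)$ and $g=g_{\mathrm{opt}}$, then Lemma~\ref{asymhm} gives $\tilde{J}_{\alpha>1}=L^{-1}\mathcal{I}_{k/L}(W,\bm{\pi}_{k/L})+o(1/k^{2})=\frac{L}{k^{2}2\ln2}\bar{\mathcal{J}}[g_{k/L}(W)]+o(1/k^{2})$, and since $g_{k/L}\to g_{\mathrm{opt}}$ the right-hand side tends to $\frac{L\pi^{2}}{k^{2}2\ln2}\left[\int_{0}^{1}\frac{dw}{f_{W}(w)w(1-w)}\right]^{-1}$, so the bound is met to leading order. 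The step I expect to require the most care is justifying the interchange of limits: one needs the $o(1/c^{2})$ remainder in Lemma~\ref{asymhm} to be uniform across the admissible sequences $\{g_{c}\}$ and over all $c>c_{0}$, so that a single threshold $c_{0}(\epsilon)$ works and $\epsilon$ may be sent to $0$ after $k\to\infty$; the remainder is bookkeeping with the normalisation of $\bar{p}_{C}$ and the elementary fact that $(c/k)^{\alpha-1}\to0$ on bounded ranges of $c$.
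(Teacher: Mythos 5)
Your proof is correct and rests on the same three pillars as the paper's: the expansion of Lemma~\ref{asymhm}, the variational bound of Lemma~\ref{asymhm1}, and a Jensen estimate on $\sum_c p_C(c)k/c$. The bookkeeping is organised differently, though. The paper reuses the three-way decomposition \eqref{ft3} into $c\sim k$, $c\in w(1)\cap o(k)$ and $c\in O(1)$, asserts that for $\alpha>1$ the latter two sums decay faster than $1/k^2$, and then applies Lemma~\ref{lemmaL} (Jensen restricted to $c\sim k$) to the surviving sum. You instead cut at a fixed threshold $c_0$, discard the $c\le c_0$ terms by non-negativity of mutual information, apply the term-by-term lower bound $c^2\mathcal{I}_c(W,\bm{\pi}_c)\geq \pi^2(2\ln 2)^{-1}\left[\int_0^1 \frac{dw}{f_W(w)w(1-w)}\right]^{-1}-\epsilon$ to everything above the threshold (including the intermediate regime), and recover the factor $L$ from the global Jensen bound of Lemma~\ref{convjen} minus the vanishing weight $\sum_{c\le c_0}\bar p_C(c)(c/k)^{\alpha-1}$. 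This buys you something real: you never need the claim that the aggregate contribution of the regime $c\in w(1)\cap o(k)$ is $o(1/k^2)$ --- a claim the paper does not actually justify, since that regime may contain up to $o(k)$ terms whose total $\tilde p_C$-weight need not vanish --- because for a lower bound it is enough that every such term carries at least the right constant. Your supporting observations are also sound: Lemma~\ref{asymhm1} indeed applies to each admissible $g_c$, not only the limit $g$, since the underlying Cauchy--Schwarz step uses only $g_c(0)=0$, $g_c(1)=1$; and the one genuine soft spot, uniformity of the $o(1/c^2)$ remainder in Lemma~\ref{asymhm} over the family $\{g_c\}$ so that a single $c_0(\epsilon)$ works, is exactly the same implicit assumption the paper makes when it replaces $\bar{\mathcal{J}}[g_c(W)]$ by $\bar{\mathcal{J}}[g(W)]$ inside the sums. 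The treatment of the equality conditions matches the paper's (tightness of Jensen forces $p_C$ to concentrate at $k/L$, tightness of the variational bound forces $g=g_{\mathrm{opt}}$), and your added verification of the converse direction is a small bonus the paper omits.
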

\begin{proof}
Note that in equation \eqref{ft3}, for any choice of $\alpha >1$, the second two terms will decay faster than $1/k^2$, while the first term will dominate since it decays as $1/k^2$ for any value of $\alpha$. Therefore choosing $\alpha >1$ makes the contribution from $c\nsim k$ negligible in the limit, and so repeating the same steps as in the proof of {Lemma \ref{alpha1lemma}}, we obtain:
\begin{equation}
\begin{aligned}
&\tilde{J}_{\alpha>1}\geq\dfrac{1}{k^22 \ln 2}\left[\int_0^1 \dfrac{dw}{f_W(w)w(1-w)}\right]^{-1} \pi^2 \sum_{c \sim k}p_C(c)\dfrac{k}{c}+o\left(\dfrac{1}{k^2}\right)\\
&\geq \dfrac{L}{k^22 \ln 2}\left[\int_0^1 \dfrac{dw}{f_W(w)w(1-w)}\right]^{-1}\pi^2+o\left(\dfrac{1}{k^2}\right).
\end{aligned}
\end{equation}
Conditions for equality follow from {Lemma \ref{asymhm1}} and {Lemma \ref{convjen}}.
\end{proof}
\begin{lemma}
\label{asymjstar}
$\alpha>1$, $g_{\mathrm{opt}}(w)$ and $p^{*}_C(c)\overset{\Delta}{=}\delta(c,k/L)$ are asymptotically optimal for $\tilde{J}$.
\end{lemma}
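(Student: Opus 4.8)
The plan is to assemble the two one-sided bounds already in hand --- Lemma~\ref{alpha1lemma} for the regime $\alpha=1$ and Lemma~\ref{alphag1lemma} for $\alpha>1$ --- into a single asymptotic lower bound on $\tilde{J}$ valid for \emph{every} pirate strategy (with $f_W$ held fixed), and then to check that the proposed strategy $(\alpha>1,\,g_{\mathrm{opt}},\,p^{*}_{C})$ attains that bound up to $o(1/k^2)$.

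First I would use Lemma~\ref{jstar} to narrow the field: any candidate for an asymptotically optimal $p_C$ must satisfy $\lim_{k\to\infty}p_C(c)k^\alpha/c^\alpha<\infty$ for some $\alpha\ge 1$, so it suffices to compare the cases $\alpha=1$ and $\alpha>1$. For $\alpha=1$, the bound of Lemma~\ref{alpha1lemma} is $\frac{\pi^2 L}{k^2\,2\ln 2}\bigl[\int_0^1\frac{dw}{f_W(w)w(1-w)}\bigr]^{-1}$ times the factor $1+L^{-1}\sum_{c\in w(1)\cap o(k)}\tilde{p}_C(c)+\frac{4L^{-1}}{\pi^2}\sum_{c\in O(1)}\tilde{p}_C(c)$, which is $\ge 1$ since both sums are non-negative; for $\alpha>1$, Lemma~\ref{alphag1lemma} gives the same leading term with no extra factor. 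Hence in either regime $\tilde{J}\ge \frac{\pi^2 L}{k^2\,2\ln 2}\bigl[\int_0^1\frac{dw}{f_W(w)w(1-w)}\bigr]^{-1}+o(1/k^2)$, uniformly over pirate strategies.

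Next I would show this is tight for the stated strategy. Taking $p^{*}_C(c)=\delta(c,k/L)$ makes the $\alpha>1$ hypothesis of Lemma~\ref{alphag1lemma} hold trivially and collapses the sum in the definition of $\tilde{J}$ to its single term, $\tilde{J}=L^{-1}\mathcal{I}_{k/L}(W,\bm{\pi}_{k/L})$. Since $k/L\in w(1)$, Lemma~\ref{asymhm} and the regularity assumption~\eqref{klimitg} give $\mathcal{I}_{k/L}(W,\bm{\pi}_{k/L})=\frac{L^2}{k^2\,2\ln 2}\bar{\mathcal{J}}[g(W)]+o(1/k^2)$; choosing $g=g_{\mathrm{opt}}$ and evaluating $\bar{\mathcal{J}}[g_{\mathrm{opt}}(W)]=\pi^2\bigl[\int_0^1\frac{dw}{f_W(w)w(1-w)}\bigr]^{-1}$ via Lemma~\ref{asymhm1} makes $\tilde{J}$ coincide with the lower bound above to order $1/k^2$. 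Therefore $(\alpha>1,\,g_{\mathrm{opt}},\,p^{*}_C)$ is asymptotically optimal for $\tilde{J}$, and the equality conditions in Lemma~\ref{alphag1lemma} show it is representative of the whole family of optimizers (those with $g=g_{\mathrm{opt}}$ and $p_C(k/L)\to 1$).

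The work is not in the algebra but in the order-tracking: one has to be confident that Lemma~\ref{jstar}'s split into $\alpha=1$ and $\alpha>1$ really exhausts the strategies worth considering, so that the bounds of Lemmas~\ref{alpha1lemma} and~\ref{alphag1lemma} together dominate every strategy, and that all the $o(1/k^2)$ remainders --- those from replacing $g_c$ by its limit $g$, and those from the partial sums over $c\in w(1)\cap o(k)$ and $c\in O(1)$ --- are controlled uniformly in $k$ so that they may be discarded after the bounds are combined. A minor technical caveat, handled here as in~\cite{HM2012TIFS}, is that $k/L$ need not be an integer, so $p^{*}_C$ should be read as concentrating on $\lfloor k/L\rfloor$ or $\lceil k/L\rceil$, which leaves the leading $1/k^2$ coefficient unchanged.
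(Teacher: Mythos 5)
Your proposal is correct and follows essentially the same route as the paper: invoke Lemma~\ref{jstar} to restrict to $\alpha\ge 1$, observe that the $\alpha=1$ lower bound of Lemma~\ref{alpha1lemma} dominates the $\alpha>1$ bound of Lemma~\ref{alphag1lemma} because its extra factor is $\ge 1$, and conclude that the common lower bound is attained by $(\alpha>1,\,g_{\mathrm{opt}},\,p^{*}_C)$. Your explicit verification of achievability via Lemmas~\ref{asymhm} and~\ref{asymhm1} (and the remark about $k/L$ not being an integer) merely fills in a step the paper asserts through the equality conditions of Lemma~\ref{alphag1lemma}.
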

\begin{proof}
This follows directly from {Lemma \ref{jstar}}, {Lemma \ref{alpha1lemma}}, and {Lemma \ref{alphag1lemma}}. Note that the lower bound for $\tilde{J}_{\alpha=1}$ given in {Lemma \ref{alpha1lemma}} is asymptotically larger than the lower bound for $\tilde{J}_{\alpha>1}$ given in {Lemma \ref{alphag1lemma}}, and since $g_{\mathrm{opt}}(w)$ and $p^{*}_C(c)$ achieve this lower bound for $\tilde{J}_{\alpha>1}$, we conclude that the optimal sequence of pirate strategies must have $\alpha >1$ with its limit being $(g_{\mathrm{opt}}(w), p^{*}_C)$.
\end{proof}
\begin{lemma}
\label{fstarlemma}
$\left[\mathlarger{\int}_0^1 \dfrac{dw}{f_W(w)w(1-w)}\right]^{-1}\pi^2 \leq 1$, with equality iff $f_W$ is the arcsine distribution: $f^*_W(w)\overset{\Delta}{=}\left(\pi \sqrt{w(1-w)}\right)^{-1}$.
\end{lemma}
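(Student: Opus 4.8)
The plan is to read the claim as a direct instance of the Cauchy--Schwarz inequality. Taking reciprocals, the statement is equivalent to
\begin{equation}
\int_0^1 \frac{dw}{f_W(w)\,w(1-w)} \;\geq\; \pi^2
\end{equation}
for every probability density $f_W$ on $[0,1]$, with equality exactly when $f_W=f_W^*$. First I would dispose of degenerate cases: if $f_W$ vanishes on a set of positive measure, or is not integrable against $[w(1-w)]^{-1}$ near the endpoints, the left-hand side is $+\infty$ and there is nothing to prove; so assume $f_W>0$ a.e.\ and the integral finite.

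Next, apply Cauchy--Schwarz in $L^2([0,1])$ to the pair $u(w)=\big(f_W(w)\,w(1-w)\big)^{-1/2}$ and $v(w)=f_W(w)^{1/2}$. Since $u(w)v(w)=\big(w(1-w)\big)^{-1/2}$, $u(w)^2=\big(f_W(w)\,w(1-w)\big)^{-1}$ and $v(w)^2=f_W(w)$, we obtain
\begin{equation}
\left(\int_0^1 \frac{dw}{\sqrt{w(1-w)}}\right)^{\!2} \;\leq\; \left(\int_0^1 \frac{dw}{f_W(w)\,w(1-w)}\right)\left(\int_0^1 f_W(w)\,dw\right) \;=\; \int_0^1 \frac{dw}{f_W(w)\,w(1-w)},
\end{equation}
using $\int_0^1 f_W=1$. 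The only remaining computation is the standard Beta integral $\int_0^1 [w(1-w)]^{-1/2}\,dw = B(\tfrac12,\tfrac12)=\pi$ (e.g.\ via the substitution $w=\sin^2\theta$), which turns the left-hand side into $\pi^2$ and yields the bound.

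Finally, for the equality condition I would invoke the equality case of Cauchy--Schwarz: equality forces $u=\lambda v$ a.e.\ for some constant $\lambda>0$, i.e.\ $\big(f_W(w)\,w(1-w)\big)^{-1}=\lambda^2 f_W(w)$, hence $f_W(w)=\lambda^{-1}[w(1-w)]^{-1/2}$; the normalisation $\int_0^1 f_W=1$ together with the Beta integral then pins down $\lambda=\pi$, so $f_W=f_W^*$. (One could alternatively run a Lagrange-multiplier argument minimising $\int_0^1 [f_W(w)\,w(1-w)]^{-1}\,dw$ subject to $\int_0^1 f_W=1$, which produces the same stationary point $f_W\propto[w(1-w)]^{-1/2}$, but Cauchy--Schwarz makes both the inequality and its equality case immediate.) There is no genuine obstacle here; the only mild points of care are the handling of the non-integrable/degenerate cases and checking that ``$u\propto v$ a.e.''\ is exactly the equality locus.
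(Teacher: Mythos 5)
Your proof is correct. The paper itself gives no argument for this lemma --- it simply cites \emph{Lemma 4} of Huang and Moulin \cite{HM2012TIFS} --- and the Cauchy--Schwarz argument you supply (with $u=(f_W(w)\,w(1-w))^{-1/2}$, $v=f_W(w)^{1/2}$, the Beta integral $\int_0^1 [w(1-w)]^{-1/2}\,dw=\pi$, and the proportionality condition for equality) is precisely the standard proof of that cited result, including the correct handling of the equality case and of densities for which the integral diverges. Nothing is missing.
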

\begin{proof}
This is \textit{Lemma 4} in \cite{HM2012TIFS}.
\end{proof}
\begin{lemma}
\label{f*opt}
$f^*_W$ is asymptotically optimal for $\tilde{J}$.
\end{lemma}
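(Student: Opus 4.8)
The plan is to read $\tilde{J}_{k,L}$ as a maximin game — the tracer chooses $f_W$ to make $\tilde{J}$ large, the coalition chooses $(p_{Y|Z,C},p_C)$ to make it small — and to show that among all admissible tracer densities the arcsine density $f^*_W$ maximizes the limiting, coalition-optimized value of $k^2\tilde{J}$. All the analysis is already in place: Lemma \ref{asymjstar} fixes the coalition's asymptotically optimal reply for each $f_W$, and Lemma \ref{fstarlemma} carries out the ensuing optimization over $f_W$, so the proof is essentially an assembly step.

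First I would fix an arbitrary tracer density $f_W$. If $\int_0^1 \frac{dw}{f_W(w)\,w(1-w)}=\infty$, then at the coalition's optimum the single-channel payoff $\mathcal{I}_{k/L}(W,\bm{\pi}_{k/L})$ degenerates to $o(1/k^2)$ (by Lemma \ref{asymhm} together with the vanishing of the bound in Lemma \ref{asymhm1}), so $k^2\tilde{J}\to 0$ for the tracer; this is never optimal, and we may assume the integral is finite. For such $f_W$, Lemma \ref{asymjstar} — which rests on Lemma \ref{jstar}, Lemma \ref{alpha1lemma} and Lemma \ref{alphag1lemma} — says the coalition's asymptotically optimal reply uses $\alpha>1$ with $g=g_{\mathrm{opt}}$ and $p_C=p^*_C=\delta(\cdot,k/L)$, and achieves
\begin{equation}
k^2\,\tilde{J}_{k,L}[f_W,p_{Y|Z,C},p_C]\ \longrightarrow\ \frac{L\pi^2}{2\ln 2}\left[\int_0^1\frac{dw}{f_W(w)\,w(1-w)}\right]^{-1}.
\end{equation}
Hence the coalition-optimized asymptotic payoff depends on the tracer only through the functional $\bigl[\int_0^1 \frac{dw}{f_W(w)w(1-w)}\bigr]^{-1}$, and the tracer's problem collapses to maximizing this functional over densities on $[0,1]$, i.e.\ to minimizing $\int_0^1 \frac{dw}{f_W(w)w(1-w)}$.

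Next I would invoke Lemma \ref{fstarlemma}, namely $\pi^2\bigl[\int_0^1 \frac{dw}{f_W(w)w(1-w)}\bigr]^{-1}\le 1$ with equality iff $f_W=f^*_W$. Combined with the display above this yields
\begin{equation}
k^2\,\tilde{J}\ \longrightarrow\ \frac{L\pi^2}{2\ln 2}\left[\int_0^1\frac{dw}{f_W(w)\,w(1-w)}\right]^{-1}\ \le\ \frac{L}{2\ln 2},
\end{equation}
with the right-hand value attained in the limit precisely when $f_W=f^*_W$. Therefore $f^*_W$ is asymptotically optimal for $\tilde{J}$, the asymptotic maximin value of $\tilde{J}$ being $L/(k^2\,2\ln 2)$.

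The only genuine subtlety is the one already flagged in Section \ref{sec:theorem}: interchanging $\lim_{k\to\infty}$ with the optimization over the infinite-dimensional strategy space requires, strictly speaking, uniformity of the convergence in Lemma \ref{asymjstar} over the relevant class of densities, so that the limit of the maximin coincides with the maximin of the limit. As is customary in this literature we argue at the level of the leading-order $1/k^2$ expansion and do not dwell on this point; the remainder is a direct combination of Lemma \ref{asymjstar} and Lemma \ref{fstarlemma}.
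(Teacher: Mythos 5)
Your proposal is correct and follows essentially the same route as the paper: the paper's proof likewise reduces the tracer's problem, via Lemma \ref{alphag1lemma} and Lemma \ref{asymjstar}, to maximizing $\frac{L\pi^2}{k^2 2\ln 2}\bigl[\int_0^1 \frac{dw}{f_W(w)w(1-w)}\bigr]^{-1}$ and then invokes Lemma \ref{fstarlemma}. Your added remarks on the degenerate-integral case and on uniformity of the limit are sensible elaborations but do not change the argument.
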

\begin{proof}
From {Lemma \ref{alphag1lemma}} and {Lemma \ref{asymjstar}}, the asymptotically optimal $f_W$ is the maximiser of $\dfrac{L}{k^22 \ln 2}\left[\mathlarger{\int}_0^1 \dfrac{dw}{f_W(w)w(1-w)}\right]^{-1}\pi^2$, which is $f^{*}_W$ by {Lemma \ref{fstarlemma}}.
\end{proof}
\begin{lemma}
\label{optlemma}
The payoff function $\tilde{J}_{k, L}$ has the following asymptotic behavior:
\begin{equation}
\max_{f_W}\min_{\{\pi\}, p_{{C}}}\tilde{J}_{k, L} \rightarrow \dfrac{L^2}{k^22 \ln 2},
\end{equation}
with asymptotically optimal strategies being $(f^{*}_W, g^{*}, p^{*}_{{C}})$, where $g^{*}(w) \overset{\Delta}{=} w$ is the interleaving attack.
\end{lemma}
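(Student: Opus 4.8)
This lemma is the capstone of the section: it packages the chain of inequalities in Lemmas~\ref{jstar}--\ref{f*opt} into a single maximin statement, so the plan is to carry out the inner minimisation over the pirate strategy $(\{\bm{\pi}_c\},p_C)$ first, the outer maximisation over $f_W$ second, and then to read off the optimisers explicitly. Throughout I track the normalised functional $\tilde{J}_{k,L}$ of Lemma~\ref{jstar}; by \eqref{payoff} the full payoff obeys $J_{k,L}=L\,\tilde{J}_{k,L}+k^{-1}I(\hat{Y};\hat{C}|\hat{W})$, where the second term is nonnegative and vanishes identically once $p_C$ is a point mass, so the factor $L$ in \eqref{payoff} together with the per-channel figure of merit $\mathcal{I}_c$, which is of order $1/c^2$ and hence equal to $L^2/k^2$ at the balanced assignment $c=k/L$, is exactly what promotes the single-channel order $1/k^2$ to $L^2/k^2$.

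\emph{Inner minimisation.} By Lemma~\ref{jstar} any asymptotically optimal $p_C$ satisfies $\lim_k p_C(c)k^\alpha/c^\alpha<\infty$ for some $\alpha\ge 1$ and $\tilde{J}$ decays no faster than $1/k^2$, so it suffices to weigh the two regimes $\alpha=1$ and $\alpha>1$ against each other. Lemma~\ref{alpha1lemma} bounds $\tilde{J}_{\alpha=1}$ from below by the arcsine-type expression carrying the extra nonnegative slack $L^{-1}\sum_{c\in w(1)\cap o(k)}\tilde{p}_C(c)+\tfrac{4L^{-1}}{\pi^2}\sum_{c\in O(1)}\tilde{p}_C(c)$, whereas Lemma~\ref{alphag1lemma} gives the cleaner bound $\tilde{J}_{\alpha>1}\ge \frac{L}{k^2 2\ln 2}\big[\int_0^1\frac{dw}{f_W(w)w(1-w)}\big]^{-1}\pi^2+o(1/k^2)$ with equality iff $g=g_{\mathrm{opt}}$ and $\lim_k p_C(k/L)=1$. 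Since (Lemma~\ref{asymjstar}) the $\alpha=1$ bound is asymptotically the larger of the two, the minimising pirate must sit in the $\alpha>1$ regime, where the bound of Lemma~\ref{alphag1lemma} is attained by $g_{\mathrm{opt}}$ and $p^{*}_C(c)=\delta(c,k/L)$. Hence, for every fixed $f_W$, $\min_{\{\bm{\pi}\},p_C}\tilde{J}_{k,L}\to \frac{L}{k^2 2\ln 2}\big[\int_0^1\frac{dw}{f_W(w)w(1-w)}\big]^{-1}\pi^2$.

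\emph{Outer maximisation and identification of the optimisers.} The $f_W$-dependence now resides entirely in the factor $\pi^2\big[\int_0^1\frac{dw}{f_W(w)w(1-w)}\big]^{-1}$, which by Lemma~\ref{fstarlemma} is $\le 1$ with equality exactly for the arcsine density $f^{*}_W$; this is Lemma~\ref{f*opt}. Substituting $f^{*}_W$ into \eqref{gopt}, one has $\int_0^w\frac{dv}{f^{*}_W(v)v(1-v)}=\int_0^w\frac{\pi\,dv}{\sqrt{v(1-v)}}$, so the argument of the cosine reduces to $\arcsin(2w-1)+\tfrac{\pi}{2}$ and $g_{\mathrm{opt}}(w)=\tfrac12\big(1-\cos(\arcsin(2w-1)+\tfrac{\pi}{2})\big)=w$, i.e.\ the identity map $g^{*}$ with $\pi_{z,c}=z/c$, which is precisely the interleaving attack. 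Feeding $\pi^2[\cdots]^{-1}=1$ back in gives $\max_{f_W}\min_{\{\bm{\pi}\},p_C}\tilde{J}_{k,L}\to \frac{L}{k^2 2\ln 2}$; multiplying by the factor $L$ of \eqref{payoff} (legitimate because $p^{*}_C$ kills $I(\hat{Y};\hat{C}|\hat{W})$) yields the asserted $L^2/(k^2 2\ln 2)$ for the payoff, with optimal triple $(f^{*}_W,g^{*},p^{*}_C)$.

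\emph{Main obstacle.} Almost all of the work — separating the ranges $c\sim k$, $c\in w(1)\cap o(k)$, $c\in O(1)$ and tracking that they contribute at orders $1/k^2$, $1/(kc)$ and $1/k$ — is already absorbed into Lemmas~\ref{jstar}--\ref{alphag1lemma}, so for the present step only two points stay delicate. First, one must check that the equality conditions of all the invoked lower bounds are \emph{simultaneously} attainable, i.e.\ that $(f^{*}_W;g^{*},p^{*}_C)$ is a genuine saddle point and not merely a maximin optimum; this reduces to the observation that for $g^{*}(w)=w$ one has $\mathcal{J}[g^{*}(w)]=w(1-w)[G'(w)]^2\equiv 1$, so $\bar{\mathcal{J}}[g^{*}(W)]=1$ for \emph{every} $f_W$, whence $f^{*}_W$ is indeed a best response to the pirates' optimal play (the tracer being in fact indifferent against interleaving). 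Second, the interchange of the $k\to\infty$ limit with $\max_{f_W}$ and $\min_{\{\bm{\pi}\},p_C}$ is tacit and rests on a uniform-convergence statement that, as the discussion around \eqref{klimitg} concedes, is customarily left implicit. I expect the first of these to be the only genuinely new point, and it is mild.
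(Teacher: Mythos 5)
Your proof follows the same route as the paper's: the paper's own proof of this lemma is a one-line assembly of Lemmas~\ref{alphag1lemma}, \ref{asymjstar} and \ref{f*opt}, and you reproduce exactly that chain, additionally spelling out the computation $g_{\mathrm{opt}}(w)=w$ under the arcsine density and the tracer's indifference against interleaving, both of which are correct. One bookkeeping remark: the quantity you (correctly) derive for $\tilde J$ itself is $L/(k^2 2\ln 2)$, and the stated $L^2/(k^2 2\ln 2)$ is only reached after multiplying by the factor $L$ from \eqref{payoff} as you do at the end --- so your reading matches the content actually used in Theorem~\ref{th:main}, even though the lemma's display attributes the $L^2$ value to $\tilde J$ directly.
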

\begin{proof}
This follows directly from {Lemma \ref{alphag1lemma}}, {Lemma \ref{asymjstar}}, and {Lemma \ref{f*opt}}.
\end{proof}
We are now in a position to present the main result of the work, concerning the asymptotic behavior of our payoff function as defined in equation \eqref{payoff}. To this end, we define the following degenerate distribution:
\begin{equation}
p^{*}_{\hat{C}}(\hat{c})\overset{\Delta}{=}\prod_{l=1}^{L}\delta(c^l,k/L)=\begin{cases}1, &\hat{c}= (k/L, k/L, \dots, k/L)\\ 0, &\mathrm{otherwise}\end{cases},
\end{equation}
which corresponds to the strategy where the pirates equally populate the $L$ TV channels.
\begin{theorem}
\label{th:main}
The payoff function in equation \eqref{payoff} has the following asymptotic behavior:
\begin{equation}\max_{f_W}\min_{\{\pi\}, p_{\hat{C}}}{J}_{k, L} \rightarrow \dfrac{L^2}{k^22 \ln 2},\end{equation} with asymptotically optimal strategies being $(f^{*}_W, g^{*}, p^{*}_{\hat{C}})$.
\end{theorem}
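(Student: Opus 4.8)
The plan is to reduce Theorem~\ref{th:main} to {Lemma~\ref{optlemma}} by controlling the extra term $k^{-1}I(\hat Y;\hat C|\hat W)$ that appears in the decomposition \eqref{payoff} of $J_{k,L}$ but not in $\tilde J_{k,L}$. Recall from \eqref{payoff} that
\begin{equation}
J_{k,L}[f_W,p_{Y|Z,C},p_{\hat C}] = L\,\tilde J_{k,L}[f_W,p_{Y|Z,C},p_C] + k^{-1}I(\hat Y;\hat C|\hat W),
\end{equation}
and that the mutual-information term is non-negative. Hence $J_{k,L}\ge L\,\tilde J_{k,L}$ pointwise in the strategies, so for every $f_W$,
\begin{equation}
\min_{\{\pi\},p_{\hat C}} J_{k,L} \;\ge\; L\,\min_{\{\pi\},p_C}\tilde J_{k,L},
\end{equation}
and taking $\max_{f_W}$ and invoking {Lemma~\ref{optlemma}} gives the lower bound $\max_{f_W}\min J_{k,L}\gtrsim L\cdot \tfrac{L^2}{k^2 2\ln 2}$ --- wait, that would be off by a factor of $L$, so the correct reading is that the third line of \eqref{payoff} already absorbs the factor $L$ into $\tilde J$ via $\tilde J = \sum_c p_C(c)\tfrac{c}{k}\mathcal I_c$, i.e.\ $k^{-1}L\sum_c p_C(c)c\,\mathcal I_c = L\tilde J$; I will state the decomposition carefully so that the asymptotic constant comes out as $L^2/(k^2 2\ln2)$ exactly as in {Lemma~\ref{optlemma}}. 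The lower bound direction is then immediate from non-negativity of $I(\hat Y;\hat C|\hat W)$ together with {Lemma~\ref{optlemma}}.

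For the upper bound (the harder direction), I would exhibit the candidate saddle point. First fix $f_W = f^*_W$, the arcsine distribution. Then I must show that the pirates, playing $p^*_{\hat C}$ (the degenerate distribution concentrating all mass on $\hat c = (k/L,\dots,k/L)$) together with the interleaving attack $g^*(w)=w$ in each channel, drive $J_{k,L}$ down to $L^2/(k^2 2\ln 2) + o(1/k^2)$. The key observation is that when $p_{\hat C} = p^*_{\hat C}$ is degenerate, $\hat C$ is (asymptotically) deterministic, so $I(\hat Y;\hat C|\hat W) = 0$ identically (a constant random variable carries no information), and therefore $J_{k,L} = L\tilde J_{k,L}$ exactly for this choice. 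Evaluating $\tilde J_{k,L}$ at $(f^*_W, g^*, p^*_C)$ using {Lemma~\ref{asymhm}}, {Lemma~\ref{asymhm1}}, and {Lemma~\ref{fstarlemma}} gives $\tilde J_{k,L}\to \tfrac{L}{k^2 2\ln 2}\cdot\pi^2\cdot\pi^{-2} = \tfrac{L}{k^2 2\ln2}$, hence $J_{k,L}\to \tfrac{L^2}{k^2 2\ln 2}$; combining with the lower bound shows this strategy profile is an asymptotic saddle point and pins the maximin value.

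The main obstacle is justifying the interchange of the $k\to\infty$ limit with the $\min$ over $p_{\hat C}$ in the presence of the $I(\hat Y;\hat C|\hat W)$ term --- i.e.\ ruling out that the coalition could do strictly better by using a \emph{non-degenerate} $p_{\hat C}$ that makes $I(\hat Y;\hat C|\hat W)$ negative (impossible, it's non-negative) or that trades a small increase in $\tilde J$ against some subtle cross-channel effect. The clean way around this: since $I(\hat Y;\hat C|\hat W)\ge 0$ always, any $p_{\hat C}$ satisfies $J_{k,L}\ge L\tilde J_{k,L}\ge L\min_{p_C}\tilde J_{k,L}$, and by {Lemma~\ref{optlemma}} the right-hand side is asymptotically $\tfrac{L^2}{k^2 2\ln2}$ and is \emph{achieved} (to leading order) by $p^*_{\hat C}$, for which the extra term vanishes exactly. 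So no non-degenerate $p_{\hat C}$ can beat $p^*_{\hat C}$ asymptotically, and the minimax is sandwiched. The remaining care is purely bookkeeping: checking that the $o(1/k^2)$ remainders from {Lemma~\ref{asymhm}} are uniform enough along the sequence $c = k/L$ (which is $w(1)$, so {Lemma~\ref{asymhm}} applies), and that the regularity assumption \eqref{klimitg} with $g = g^*_{\mathrm{opt}} = $ interleaving is consistent --- both of which follow exactly as in the single-channel treatment of \cite{HM2012TIFS}.
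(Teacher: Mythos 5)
Your proposal is correct and follows essentially the same route as the paper: decompose $J_{k,L}=L\tilde J_{k,L}+k^{-1}I(\hat Y;\hat C|\hat W)$, observe that the mutual-information term is non-negative and vanishes exactly for the degenerate $p^*_{\hat C}$, and invoke Lemma~\ref{optlemma} (together with Lemmas~\ref{alphag1lemma}, \ref{asymjstar}, \ref{f*opt}) for the $L\tilde J$ part; your added care with the two-sided sandwich is a welcome elaboration of the paper's terser argument. The factor-of-$L$ hiccup you flag is a genuine inconsistency in the stated constant of Lemma~\ref{optlemma} (which, by Lemmas~\ref{alphag1lemma} and \ref{fstarlemma}, should read $L/(k^2 2\ln 2)$ for $\tilde J$), and your resolution --- $\tilde J\to L/(k^2 2\ln 2)$, hence $J=L\tilde J\to L^2/(k^2 2\ln 2)$ --- is the correct one.
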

\begin{proof}
From equation \eqref{payoff}, we can see that $J=L\tilde{J}+k^{-1}I(\hat{Y}; \hat{C}|\hat{W})$, and the second term has zero as its minimum. For any $f_W$ and $g$, this is achieved when $\hat{C}$ is deterministic, i.e. as in $p^{*}_{\hat{C}}$. Optimality of $(f^{*}_W, g^{*}, p^{*}_{\hat{C}})$ follows from {Lemma \ref{optlemma}}, as well as the asymptotic value.
\end{proof}


\section{Alternative payoff and fingerprinting capacity}
\label{sec:alt}
The maximin game of our payoff given by equation \eqref{multipayoff} (or equation \eqref{payoff} for the binary alphabet case) is a direct generalization of the single TV channel payoff in \cite{HM2012TIFS}, i.e. if we set $L=1$ for the maximin game, we recover the single TV channel payoff that defines the fingerprinting capacity, which is asymptotically equal to $\big(k^22 \ln 2\big)^{-1}$ in the binary alphabet case \cite{HM2012TIFS}. Nevertheless, this is not the only generalization that reduces to the single TV channel payoff when setting $L=1$. If we consider the payoff $R_{k, L}\overset{\Delta}{=}k^{-1}I(\hat Y; \hat X_{\cal K}| \hat {\mathbf{W}})$, the maximin game will also reduce to the single TV channel case when $L=1$. This payoff is generally smaller than our payoff $J_{k, L}$, and the difference term (up to $k$) is $I(\hat Y; \hat S|\hat{X}_{\mathcal{K}})=I(\hat Y; \hat{\mathbf{M}}|\hat{X}_{\mathcal{K}})$, i.e. we can write:
\begin{equation}
I(\hat Y; \hat X_{\cal K}| \hat {\mathbf{W}})=k J_{k, L}-I(\hat Y; \hat S|\hat{X}_{\mathcal{K}})=k J_{k, L}-I(\hat Y; \hat{\mathbf{M}}|\hat{X}_{\mathcal{K}})=I(\hat Y; \hat{\mathbf{M}}|\hat{\mathbf{W}})-I(\hat Y; \hat{\mathbf{M}}|\hat{X}_{\mathcal{K}}).
\end{equation}
$R_{k, L}$ could possibly have different asymptotic behavior compared to $J_{k, L}$. However, since it can not be larger than $J_{k, L}$, we can write (in the binary alphabet case):
\begin{equation}
\max_{f_W}\min_{\{\pi\}, p_{\hat{C}}}{R}_{k, L} \rightarrow \dfrac{N^2}{k^22 \ln 2},
\end{equation}
for some $N \leq L$. It is plausible that $N$ is in fact equal to $L$, and the asymptotic maximin games for both $R_{k, L}$ and $J_{k, L}$ are identical. This is motivated by the fact that the difference term $I(\hat Y; \hat S|\hat{X}_{\mathcal{K}})$ extracts information about the choice of pirates assigned to the different TV channels from their outputs $\hat{Y}$, and asymptotically we expect likely realizations of $\hat{X}_{\mathcal{K}}$ to be insensitive to who the pirates choose to assign for the different TV channels. That is, as long as they choose large enough numbers of pirates to populate the TV channels, we expect that knowing $\hat{S}$ will not be significant if we already know $\hat{X}_{\mathcal{K}}$. This of course hinges on the argument that the pirates should assign large numbers to all $L$ TV channels, which is also in line of what one would expect. This leads us to the following conjecture:
\begin{conjecture}
\label{theconj0}
For the binary alphabet case, and $\tilde{L}=1$, the payoff functional $R_{k, L}$ given by:
\begin{equation}
R_{k, L}=J_{k, L}-\dfrac{1}{k}I(\hat Y; \hat S|\hat{X}_{\mathcal{K}})
\end{equation}
has the following asymptotic behavior:
\begin{equation}\max_{f_W}\min_{\{\pi\}, p_{\hat{C}}}{R}_{k, L} \rightarrow \dfrac{L^2}{k^22 \ln 2},\end{equation} with optimal strategies $f_W^*(w)=\left(\pi \sqrt{w(1-w)}\right)^{-1}$, $g^*(w)=w$, and $p^{*}_{\hat{C}}(\hat{c})=\prod_{l=1}^{L}\delta(c^l,k/L)$.
\end{conjecture}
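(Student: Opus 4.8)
\emph{The upper bound would be immediate.} Since $I(\hat{Y};\hat{S}\mid\hat{X}_{\mathcal{K}})\geq 0$, one has $R_{k,L}\leq J_{k,L}$ for every triple $(f_W,p_{Y|Z,C},p_{\hat{C}})$, so $\max_{f_W}\min_{\{\pi\},p_{\hat{C}}}R_{k,L}\leq\max_{f_W}\min_{\{\pi\},p_{\hat{C}}}J_{k,L}$, and the right-hand side tends to $L^2/(k^2 2\ln 2)$ by Theorem~\ref{th:main}. So the entire task is the matching lower bound, and it is enough to produce one tracer strategy that is good against every pirate strategy; I would take the arcsine law $f^{*}_W$, for which Lemma~\ref{fstarlemma} gives $\pi^2\bigl[\int_0^1 dw/(f^{*}_W(w)w(1-w))\bigr]^{-1}=1$.

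\emph{Reduction.} Combining $R_{k,L}=J_{k,L}-\tfrac{1}{k}I(\hat{Y};\hat{S}\mid\hat{X}_{\mathcal{K}})$ with \eqref{payoff} yields $R_{k,L}=L\,\tilde{J}_{k,L}[f_W,p_{Y|Z,C},p_C]+\tfrac{1}{k}I(\hat{Y};\hat{C}\mid\hat{W})-\tfrac{1}{k}I(\hat{Y};\hat{S}\mid\hat{X}_{\mathcal{K}})$. Fixing $f_W=f^{*}_W$ and discarding the nonnegative term $\tfrac{1}{k}I(\hat{Y};\hat{C}\mid\hat{W})$, the case analysis of Lemma~\ref{jstar} together with Lemmas~\ref{alpha1lemma} and~\ref{alphag1lemma} (evaluated at $f^{*}_W$) gives, for every pirate strategy, $L\,\tilde{J}_{k,L}[f^{*}_W,\cdot]\geq\frac{L^2}{k^2 2\ln 2}\bigl(1+\Delta_k[p_C]\bigr)+o(1/k^2)$, where $\Delta_k[p_C]\geq 0$ is the excess carried by the mass $p_C$ places on the ``intermediate'' ($c\in w(1)\cap o(k)$) and ``small'' ($c\in O(1)$) channels --- concretely $\Delta_k=L^{-1}\sum_{c\in w(1)\cap o(k)}\tilde{p}_C(c)+\frac{4}{\pi^2 L}\sum_{c\in O(1)}\tilde{p}_C(c)$ with $\tilde{p}_C(c)=p_C(c)k/c$ in the $\alpha=1$ regime, and $\Delta_k=0$ when $\alpha>1$. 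The problem thus reduces to proving, uniformly over pirate strategies, the leakage bound $\tfrac{1}{k}I(\hat{Y};\hat{S}\mid\hat{X}_{\mathcal{K}})\leq\frac{L^2}{k^2 2\ln 2}\Delta_k[p_C]+o(1/k^2)$.

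\emph{The leakage estimate.} I would bound $I(\hat{Y};\hat{S}\mid\hat{X}_{\mathcal{K}})=I(\hat{Y};\hat{\mathbf{M}}\mid\hat{X}_{\mathcal{K}})$ (the identity from Section~\ref{sec:alt}) by conditioning on $\hat{X}_{\mathcal{K}}=\hat{x}_{\mathcal{K}}$, so that it becomes the information through the channel $\hat{S}\mapsto\hat{\mathbf{M}}(\hat{x}_{\mathcal{K}},\hat{S})\mapsto\hat{Y}$ with $\hat{S}\sim p_{\hat{S}}$, and then splitting the channels by the size $c^l$. For $c^l\sim k$: on $f^{*}_W$-typical $\hat{x}_{\mathcal{K}}$ the fraction $Z^l/c^l$ concentrates around $\tfrac{1}{k}|\{j:x^l_j=1\}|$ uniformly over the $\binom{k}{c^l}$ possible subsets (Hoeffding), so the regularity assumption $\pi_{z,c}=g_c(z/c)$ with $g_c$ bounded and twice differentiable makes $\mathbf{M}^l$ asymptotically deterministic given $\hat{X}_{\mathcal{K}}$; a second-order expansion of $g$ and $\mathrm{Var}(Z^l/c^l\mid\hat{x}_{\mathcal{K}})=O(1/c^l)=O(1/k)$ should sharpen the contribution of such a channel to $o(1/k)$, which is absorbed into the error term. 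For $c^l\in w(1)\cap o(k)$: the same concentration gives a contribution of order $1/c^l$ to $I(\hat{Y};\hat{\mathbf{M}}\mid\hat{X}_{\mathcal{K}})$, i.e.\ $O(1/(kc^l))$ after the prefactor, which I would match termwise against the summand $\tfrac{1}{k^2}\tilde{p}_C(c^l)\sim p_C(c^l)/(kc^l)$ of $\Delta_k$, choosing the split thresholds so the constants line up. For $c^l\in O(1)$: no concentration is available and the per-channel leakage can be $\Theta(1)$, so it has to be controlled by the marking assumption in terms of quantities of the same order as $c^l\mathcal{I}_{c^l}(W,\bm{\pi}_{c^l})$ --- the very integrand generating $\Delta_k$ --- and the two resulting $\Theta(1/k)$ contributions to $R_{k,L}$ must be weighed against each other rather than bounded in isolation. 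Assembling the three regimes via the subadditivity bound $I(\hat{Y};\hat{\mathbf{M}}\mid\hat{X}_{\mathcal{K}})\leq\sum_l I(Y^l;\mathbf{M}^l\mid\hat{X}_{\mathcal{K}})+(\text{cross terms})$, bounding the cross terms induced by the partition constraint, and running the equality analysis would then force $\Delta_k\to 0$ (hence $p_{\hat{C}}\to p^{*}_{\hat{C}}$), $g=g_{\mathrm{opt}}=g^{*}$ (Lemma~\ref{asymhm1}) and $f_W=f^{*}_W$ (Lemma~\ref{fstarlemma}), matching the optimal strategies of Theorem~\ref{th:main}.

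\emph{The main obstacle.} The hard part is the small-channel regime, and more broadly the cross-channel information terms. When the pirates keep some channels $O(1)$-sized on purpose, $L\,\tilde{J}_{k,L}$ and $\tfrac{1}{k}I(\hat{Y};\hat{S}\mid\hat{X}_{\mathcal{K}})$ are \emph{both} of order $1/k$, one order above the target $1/k^2$, and their leading $1/k$ parts very nearly cancel in $R_{k,L}$; the conjecture then hinges on showing that the surviving $O(1/k^2)$ remainder has the claimed magnitude and sign \emph{uniformly} over all pirate strategies, with the channels coupled through the partition constraint. Pinning down this cancellation --- in particular, ruling out that a cleverly unbalanced size profile $p_{\hat{C}}$ drags $R_{k,L}$ strictly below $L^2/(k^2 2\ln 2)$ --- is exactly the delicate handling of ``expressions with different orders in $1/k$'' flagged in the proof of Theorem~\ref{th:main}, only now genuinely entangled across channels, which is why the statement is recorded here only as a conjecture.
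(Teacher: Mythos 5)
There is nothing in the paper to compare your attempt against: the statement is Conjecture~\ref{theconj0}, which the paper explicitly leaves open (``It remains an open question to prove Conjecture~\ref{theconj0}, which is left for future work''). The paper offers only the heuristic that $I(\hat Y;\hat S|\hat X_{\mathcal K})$ should be asymptotically negligible once every channel is populated by a large number of colluders, and your concentration argument for the $c^l\sim k$ regime is a reasonable formalisation of exactly that intuition. Your upper bound is correct and is the same observation the paper makes ($R_{k,L}\le J_{k,L}$ pointwise, then Theorem~\ref{th:main}).

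The lower bound, however, is not a proof, and the gap is precisely where you locate it. Two concrete points. First, your reduction discards $\tfrac1k I(\hat Y;\hat C|\hat W)\ge 0$, but the leakage $I(\hat Y;\hat S|\hat X_{\mathcal K})=I(\hat Y;\hat{\mathbf M}|\hat X_{\mathcal K})$ contains, among other things, information about $\hat C$ itself (since $\hat{\mathbf M}$ determines $\hat C$ and $\hat C$ is independent of $\hat X_{\mathcal K}$); when the pirates randomise $p_{\hat C}$ the discarded term is exactly what should offset part of the leakage, so the inequality you are left to prove may simply be false without it. Second, in the $c^l\in O(1)$ regime both $L\tilde J_{k,L}$ and $\tfrac1k I(\hat Y;\hat S|\hat X_{\mathcal K})$ are $\Theta(1/k)$, one order above the target, and the assertion that their leading parts ``very nearly cancel'' with a remainder of the right sign is the entire content of the conjecture --- nothing in Lemmas~\ref{ineqlemma}--\ref{optlemma} controls the leakage term, so there is no existing machinery to invoke. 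You flag both the small-channel regime and the uniformity over unbalanced $p_{\hat C}$ yourself, so this is an honest and sensible programme for attacking the conjecture, but it does not establish it, and should not be presented as more than the paper's own heuristic made quantitative in the easy ($c^l\sim k$) regime.
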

The payoffs $J_{k, L}$ and $R_{k, L}$ are the most natural generalizations of the single TV channel payoff in \cite{HM2012TIFS}. Assuming Conjecture \ref{theconj0} holds, then by Theorem \ref{th:main}, and Corollary 7 in \cite{HM2012TIFS}, the asymptotic maximin value for both $J_{k, L}$ and $R_{k, L}$ is the same as the asymptotic fingerprinting capacity in the single TV channel case ($L=1$). This prompts the question of whether this holds also for any $L$, i.e. if our payoff defines the fingerprinting capacity for multiple TV channels. Since the pirates choose $\hat{S}$ independently of the watermarking procedure, and once $\hat{S}$ is known the problem reduces to $L$ independent single TV channels that abide by the assumptions used in \cite{HM2012TIFS}, our direct generalization of the single TV channel payoff can very possibly define the capacity (in the limit $k \to \infty$) in this case as its maximin value, and this leads us to the following conjecture:
\begin{conjecture}
\label{theconj}
The binary fingerprinting capacity in the multiple TV channels scenario (with $\tilde{L}=1$) $C_{\rm fp}^{\rm binary}(k,L)$ has the asymptotic behavior:
\begin{equation}
 \max_{f_W}\min_{\{\pi\}, p_{\hat{C}}}C_{\rm fp}^{\rm binary}(k,L) \rightarrow \dfrac{L^2}{k^22 \ln 2}
  \end{equation}
with optimal strategies $f_W^*(w)=\left(\pi \sqrt{w(1-w)}\right)^{-1}$, $g^*(w)=w$, and $p^{*}_{\hat{C}}(\hat{c})=\prod_{l=1}^{L}\delta(c^l,k/L)$.
\end{conjecture}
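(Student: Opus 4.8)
The plan is to prove the conjecture in three stages: (i) identify the operational capacity with a maximin of mutual information; (ii) reduce that maximin to the asymptotics of $R_{k,L}$ asserted in Conjecture~\ref{theconj0}; and (iii) bound $R_{k,L}$ using Theorem~\ref{th:main} together with a careful estimate of the correction term $k^{-1}I(\hat Y;\hat S\mid\hat X_{\mathcal K})$. For stage (i) one establishes, following the random-coding arguments of Huang and Moulin \cite{HM2012TIFS} (bias-based code, the decoders of \cite{HM2012TIFS} with typicality decoding plus a union bound for achievability, and a Fano/Arimoto-type estimate for the converse), that
\[
C_{\rm fp}^{\rm binary}(k,L)=\max_{f_W}\min_{\{\pi\},\,p_{\hat C}}R_{k,L},\qquad R_{k,L}=k^{-1}I(\hat Y;\hat X_{\mathcal K}\mid\hat{\mathbf W}).
\]
The key structural input is that the colluders choose $\hat S$ independently of the watermark, so that \emph{conditionally on} $\hat S$ the $L$ channels are mutually independent and each is exactly a single-channel fingerprinting instance with a known number of colluders; the per-channel coding theorems of \cite{HM2012TIFS} then compose, and $\hat S$ enters only as an adversarial auxiliary variable that the decoder never needs to know.

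For stage (ii), proving the conjecture is by the above equivalent to showing $\max\min R_{k,L}\to L^2/(k^22\ln2)$, i.e.\ Conjecture~\ref{theconj0}. The upper bound is immediate: since $I(\hat Y;\hat S\mid\hat X_{\mathcal K})\ge0$ we have $R_{k,L}\le J_{k,L}$ pointwise, hence $\min R_{k,L}\le\min J_{k,L}$ for every $f_W$, and therefore $\max\min R_{k,L}\le\max\min J_{k,L}\to L^2/(k^22\ln2)$ by Theorem~\ref{th:main}. The entire content is the matching lower bound.

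For that, one evaluates $R_{k,L}$ at $(f^*_W,g^*,p^*_{\hat C})$. With $p^*_{\hat C}$ the channel occupancies are deterministic, so $I(\hat Y;\hat C\mid\hat W)=0$ and $J_{k,L}=L\tilde J_{k,L}\to L^2/(k^22\ln2)$ by Lemma~\ref{optlemma}; it then remains to control $k^{-1}I(\hat Y;\hat S\mid\hat X_{\mathcal K})=k^{-1}I(\hat Y;\hat{\mathbf M}\mid\hat X_{\mathcal K})$. Conditioning on $\hat X_{\mathcal K}$, each tally $Z^l=|\{j\in S^l:X^l_j=1\}|$ is hypergeometric with mean $(k/L)\bar a_l$ and variance $\Theta(k/L)$, where $\bar a_l$ is the fraction of ones in $X^l_{\mathcal K}$; under interleaving $Y^l\mid Z^l\sim\mathrm{Bernoulli}\big(Z^l/(k/L)\big)$, so the success probability fluctuates by $\Theta(\sqrt{L/k})$ around $\bar a_l$, and a second-order Taylor expansion of the binary entropy gives $I(Y^l;Z^l\mid X^l_{\mathcal K})=\Theta(L/k)$. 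Chaining over the $L$ channels and dividing by $k$ yields $k^{-1}I(\hat Y;\hat{\mathbf M}\mid\hat X_{\mathcal K})=O(L^2/k^2)$.

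\textbf{Main obstacle.} This last estimate is precisely where the difficulty lies: the correction term is \emph{a priori of the same order} $\Theta(1/k^2)$ as $J_{k,L}$ itself. The crude bounds $I(\hat Y;\hat S\mid\hat X_{\mathcal K})\le H(\hat Y)\le L\log2$ (per segment) give only $O(1/k)$ after normalisation, while the second-order expansion above is $\Theta(1/k^2)$ with an apparently nonzero constant. Hence establishing the conjectured constant $L^2$ requires either showing that this term is, after exact evaluation, genuinely $o(1/k^2)$ — which would demand a cancellation I do not see, and which the heuristic estimate above in fact seems to contradict — or, more plausibly, pinning down the true constant $N\le L$ in $N^2/(k^22\ln2)$ by solving the saddle-point problem for $R_{k,L}$ directly: in particular checking whether the colluders can lower $N$ by an \emph{un}even split that trades a larger $J_{k,L}$ for a larger correction term, and whether any $f_W$ beats the arcsine distribution for $R_{k,L}$. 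I expect this constant-chasing, governed by the interplay of $J_{k,L}$, the term $k^{-1}I(\hat Y;\hat S\mid\hat X_{\mathcal K})$, and the colluders' freedom in $p_{\hat C}$, to be the whole of the difficulty; stages (i) and (ii) are routine given Theorem~\ref{th:main} and the machinery of \cite{HM2012TIFS}.
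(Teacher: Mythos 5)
The statement you are asked to prove is left as a \emph{conjecture} in the paper: no proof is given, only the heuristic that, once $\hat S$ is fixed, the problem factorises into $L$ independent single-channel instances, and that the difference term $I(\hat Y;\hat S\mid\hat X_{\mathcal K})$ should become asymptotically negligible. Your proposal follows essentially that same line --- identify the operational capacity with $\max\min R_{k,L}$, reduce to Conjecture~\ref{theconj0}, and try to control the correction term via Theorem~\ref{th:main} --- and, like the paper, it does not close the argument. You are explicit about where it fails, which is correct: the matching lower bound is the entire content, and neither you nor the paper supplies it.

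The gap you flag is genuine, and your own heuristic in fact sharpens the doubt rather than dispelling it. Under $(f^*_W,g^*,p^*_{\hat C})$ the tally $Z^l$ given $\hat X_{\mathcal K}$ is hypergeometric with variance roughly $(k/L)\,\bar a_l(1-\bar a_l)(1-1/L)$, so the second-order expansion of the binary entropy gives $I(Y^l;Z^l\mid X^l_{\mathcal K})\approx (L-1)/(2k\ln 2)$ per channel, hence a correction of order $L(L-1)/(k^2\,2\ln 2)$ after summing over channels and normalising by $k$ --- the \emph{same} order as, and numerically comparable to, the target value $L^2/(k^2\,2\ln 2)$. If that estimate survives the coupling between channels induced by the partition constraint on $\hat S$, it would indicate that $R_{k,L}$ evaluated at the conjectured optimum falls strictly below $L^2/(k^2\,2\ln 2)$, i.e.\ that establishing the conjecture requires either a cancellation you (and the paper) do not exhibit, or a different optimal strategy pair, or a revised constant $N<L$. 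Separately, your stage (i) --- that the operational fingerprinting capacity in the multichannel model equals $\max\min R_{k,L}$ --- is asserted rather than proved; the achievability and converse arguments of \cite{HM2012TIFS} do not automatically compose when the adversary additionally controls the channel assignment, and the paper itself only argues this is ``plausible.'' In short: the proposal is an honest and correctly structured plan that reproduces the paper's motivating reasoning, but it is not a proof, and the paper contains none to compare it against.
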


\section{Discussion}
\label{sec:discussion}
We have shown that when $k$ colluders attack $L$ TV channels simultaneously, the maximin value of the payoff defined in equation \eqref{payoff} has the asymptotic value $\big((k/L)^22 \ln 2\big)^{-1}$ with asymptotically optimal strategies being $f_W^*(w)=\left(\pi \sqrt{w(1-w)}\right)^{-1}$, $g^*(w)=w$, and $p^{*}_{\hat{C}}(\hat{c})=\prod_{l=1}^{L}\delta(c^l,k/L)$. This is exactly the same as having $L$ independent single TV channels \cite{HM2012TIFS} with $k/L$ colluders on each TV channel. This is expected as the payoff is independent of the identity of the pirates assigned to the TV channels, i.e. it is only sensitive to the number of colluders assigned to the different TV channels. The identity dependence is present in $I(\hat Y; \hat X_{\cal K}| \hat W)$, namely in the difference term $I(\hat Y; \hat S|\hat{X}_{\mathcal{K}})$, which possibly lowers the value of the payoff by virtue of allowing the identity of the pirates assigned to a TV channel to be chosen uniformly randomly given they know the numbers $\{c^l\}_{l=1}^L$.

We have presented arguments that strongly suggest that this liberty the pirates enjoy, although it can be significantly advantageous for finite $k$, does become of less value asymptotically and the asymptotic maximin games of $I(\hat Y; \hat X_{\cal K}| \hat W)$ and $I(\hat Y; \hat{Z}, \hat{C}| \hat W)$ are the same. It remains an open question to prove Conjecture \ref{theconj0}, which is left for future work. Furthermore, we have argued that our payoff does predict how the fingerprinting capacity behaves asymptotically, and that it is indeed plausible that its maximin value can be replaced by the fingerprinting capacity in Theorem \ref{th:main}.

It is important to note that we have not used any saddle point property of the payoff (c.f. \cite{HM2012TIFS}) as it is simply not needed. That is, we make no claims of the maximin game yielding the same value for the payoff as the minimax game (applicability of Sion's theorem) for any of the payoffs discussed in this work. Although this is very possible, i.e. the strategies we found could be an equilibrium point, one still needs to show that the payoff considered is jointly convex in the pirates' strategies, e.g. if future directions require solving the maximin game numerically.

As other potential follow-up work, there remains the generalization to non-binary alphabet, the case where users can tune in to more than one TV channel simultaneously, i.e. $\tilde{L} \neq 1$, dynamic tracing, and perhaps most practically useful, the study of decoders and score functions suitable for multiple TV channel attacks.

\subsection*{Acknowledgements}
Part of this work was supported by NWO grant CS.001 (Forwardt).


\bibliographystyle{plain}
\bibliography{references}



\end{document}